\documentclass[11pt,onecolumn,journal,draftcls,oneside]{IEEEtran}

\ifCLASSINFOpdf
   \usepackage[pdftex]{graphicx}
\else
   \usepackage[dvips]{graphicx}
\fi

\usepackage{graphicx}
\usepackage{url}
\usepackage{subfigure}
\usepackage{relsize}
\usepackage{amsmath}
\usepackage{amssymb}
\usepackage{amsthm}
\usepackage{amsfonts}
\usepackage{paralist}
\usepackage{multirow}
\usepackage{booktabs}
\usepackage{bm}
\usepackage{algorithm}
\usepackage{algorithmic}
\usepackage{authblk}
\usepackage{xcolor}
\usepackage{cancel}
\usepackage{subfigure}

% *** ALIGNMENT PACKAGES ***
%
%\usepackage{array}
%\usepackage{mdwmath}
%\usepackage{mdwtab}
%\usepackage{eqparbox}

% *** SUBFIGURE PACKAGES ***
%\usepackage[tight,footnotesize]{subfigure}

%\usepackage[caption=false]{caption}
%\usepackage[font=footnotesize]{subfig}

%\usepackage[caption=false,font=footnotesize]{subfig}

% *** FLOAT PACKAGES ***
%
%\usepackage{fixltx2e}

%\usepackage{stfloats}

% *** Do not adjust lengths that control margins, column widths, etc. ***
% *** Do not use packages that alter fonts (such as pslatex).         ***
% There should be no need to do such things with IEEEtran.cls V1.6 and later.
% (Unless specifically asked to do so by the journal or conference you plan
% to submit to, of course. )

\newcommand{\ds}{\displaystyle}

\newtheorem{theorem}{Theorem}
\newtheorem{lemma}[theorem]{Lemma}

\newtheorem{corollary}[theorem]{Corollary}

\newcounter{cond}
\setcounter{cond}{0}
\newtheorem{condition}[cond]{Condition}

\newcounter{rema}
\setcounter{rema}{0}
\newtheorem{remark}[rema]{Remark}

\newcounter{exam}
\setcounter{exam}{0}
\newtheorem{example}[exam]{Example}

\DeclareFontFamily{OT1}{pzc}{}
\DeclareFontShape{OT1}{pzc}{m}{it}%
              {<-> s * [1.3] pzcmi7t}{}
\DeclareMathAlphabet{\mathpzc}{OT1}{pzc}%
                                 {m}{it}

\newcommand{\CN}{\mathcal{N}}
\newcommand{\CH}{\mathcal{H}}
\newcommand{\Cc}{\mathpzc{c}}
\newcommand{\CV}{\mathcal{V}}
\newcommand{\CE}{\mathcal{E}}
\newcommand{\Cm}{\mathpzc{m}}
\newcommand{\CG}{\mathcal{G}}

\newcommand{\Bd}{\boldsymbol{d}}

\newcommand{\Br}{\boldsymbol{r}}

\newcommand{\Ba}{\boldsymbol{a}}
\newcommand{\BG}{\boldsymbol{G}}
\newcommand{\Be}{\boldsymbol{e}}

\newcommand{\Bpi}{\boldsymbol{\pi}}
\newcommand{\BB}{\boldsymbol{B}}
\newcommand{\BT}{\boldsymbol{T}}
\newcommand{\Blambda}{\boldsymbol{\lambda}}

% correct bad hyphenation here
\hyphenation{op-tical net-works semi-conduc-tor}

\begin{document}

% paper title
\title{{\huge Minimum-Length Scheduling with Finite Queues:\\
Solution Characterization and Algorithmic Framework}
\thanks{\em Part of the material of this paper was presented in ISIT 2012 -- IEEE International Symposium on Information Theory.}%
}

% author names and affiliations

\author[1]{Vangelis Angelakis}
\author[1,2]{Anthony Ephremides}
\author[1]{Qing He}
\author[1]{Di Yuan}
\affil[1]{Department of Science and Technology, Link{\"o}ping University, Sweden}
\affil[2]{Department of Electrical and Computer Engineering,
University of Maryland, USA}

\markboth{Submitted to IEEE Transactions on Information Theory, July 2012.}%
{Submitted to IEEE Transactions on Information Theory, July 2012.}

\maketitle

\begin{abstract}

We consider a set of transmitter-receiver pairs, or links, that share a common channel and address the problem of emptying backlogged queues at the transmitters in minimum time.
The problem amounts to determining activation subsets of links and their time durations to form a minimum-length schedule.
The problem of scheduling has been studied under various formulations before.
In this paper, we present fundamental insights and solution characterizations that include: (i) showing that the complexity of the problem remains high for any continuous and increasing rate function, (ii) formulating and proving sufficient and necessary optimality conditions of two base scheduling strategies that correspond to emptying the queues using ``{\em one-at-a-time}'' or ``{\em all-at-once}'' strategies, (iii) presenting and proving the tractability of the special case in which the transmission rates are functions only of the cardinality of the link activation sets.
These results are independent of physical-layer system specifications and are valid for any form of rate
function.
We then develop an algorithmic framework.
The framework encompasses exact as well as sub-optimal, but fast, scheduling algorithms, all under a unified principle design. Through computational experiments we finally investigate the performance of several specific algorithms.

\emph{Index Terms--} algorithm, optimality, scheduling, wireless networks.
\end{abstract}

%\begin{keywords}
%\end{keywords}

\section{Introduction}
\label{sec:introduction}

For multiple communication links sharing a common wireless channel, the fundamental aspect of access coordination is called scheduling.
It amounts to deciding which links are allowed to transmit simultaneously and for how long they should do so.
Usually, the selection of a schedule is driven by the goal of optimizing a cost criterion.
Scheduling has a long history of investigation that has ranged from simple transmission models to fully cross-layered ones that combine rate and power control with overall network resource allocation.
In this paper, we examine a version of the scheduling problem that arises from the objective of draining in minimum time the bit-contents that reside at the transmitters of a finite number of links.
That is, we consider the multiple access or interference channel with finite traffic volume that must be delivered in minimum time.

Past work on this problem includes \cite{HaSa88} in which a centralized, polynomial-time algorithm was obtained for static networks with specified link traffic requirements.
The formulation was based on mapping the network to an undirected graph and on assuming that any two links can be successfully activated simultaneously as long as they do not share common vertices on the graph.
In \cite{BoLiXi10, GoPsWa07} ${\cal NP}$-hardness was addressed for the problem of determining a minimum-length schedule under a given traffic demand in a wireless network with SINR constraints, using a protocol and  a geometric model respectively.
In some special cases the structure of the traffic demand allowed a polynomial algorithm \cite{BoEp06}.

In \cite{BjVaYu03, BjVaYu04} it was shown that more fundamental resource allocation problems in wireless networks with SINR constraints, such as node and link assignment, are also ${\cal NP}$-hard.
In these problems the goal is to assign at least one time slot to each node, or link, such that the number of time slots is minimized.
Set-covering formulations enabled a column generation method for solving the resulting linear programming relaxations.
For the minimum length scheduling problem, a column-generation-based solution method was also used in \cite{KoWi10}, which can approach an optimal solution, with the advantage of a potentially reduced complexity.
In \cite{PaEp08} the minimum-length scheduling problem was formulated as a shortest path problem on directed acyclic graphs and the authors obtained suboptimal analytic characterizations.
It is also possible to ``absorb'' the scheduling in the general network resource allocation problem as done in \cite{GeNeTa06}, where the overall criterion is to maintain network stability. However basic versions of scheduling remain important, both from the theoretical standpoint and from that of specific applications.

Our contributions include new results on the combinatorial complexity of the problem, on the structure of the optimal schedule, and finally, new necessary and/or sufficient conditions for optimality based on the values of the transmission rates that the links can transmit at, when these rates depend explicitly or implicitly on the set of links who are allowed to transmit simultaneously. Thus, our results contribute to the tightening of the joint use of physical and MAC layer approaches and point to practical and realistic algorithms for approximating, or precisely determining, an optimal schedule. To that effect we also provide a number of algorithms the performance of which we evaluate extensively.

\section{System Model}
\label{sec:system}

We consider a set $\CN = \{1, \dots, N\}$ of links, or source-destination pairs that share a common channel.
These  links are associated with a strictly positive vector of demand $\Bd = \{d_1, \dots, d_N\}^T$, with each $d_i$ representing the amount of bit-traffic stored at the transmitter of the corresponding link $i$.
Without loss of generality, assume that the entries in the demand vector are in ascending order and that they take values in a continuum.
Let $\CH$ denote the union of all subsets of $\CN$, excluding the empty set.
Clearly, $|\CH| = 2^N-1$.
We use the term {\em group} to refer to a member $\Cc \in \CH$, that is, a subset of the link set.
Scheduling a group $\Cc$ means that all members of $\Cc$ are activated simultaneously for a positive amount of time.
For any group, the service rate of each of its members is a function of the group composition.
Let $F$ denote the rate function; that is, for $\Cc \in \CH$ and link $i \in \CN$, $r_{i\Cc}=F(i, \Cc)$ represents the non-negative service rate of link $i$, if $\Cc$ is active.
Clearly, the rate values can be positive only for the members of $\Cc$, i.e., $r_{i\Cc} = 0, i \notin \Cc$. If $\Cc$ is a singleton link $i$, we use $r_{ii}$ as a more convenient, short-hand notation for the rate instead of $r_{i, \{i\}}$.

In all applications with meaningful physical interpretations, the service rate has the following property:
If two elements are served together, the rates of being served can not be higher than the individual rates, respectively.
Thus, throughout the paper, it is assumed that the service rate of any link in a group does not increase if the group is augmented, i.e., for any two groups $\Cc_1 \subset \Cc_2$ and $i \in \Cc_1 \cap \Cc_2$, $F(i, \Cc_1) \geq F(i, \Cc_2)$.
We refer to this as the {\em rate monotonicity property}.
No further conditions are imposed on $F$.

The minimum-length scheduling problem amounts to, given $(N, \Bd, F)$, selecting a set of groups $\Cc_1 \dots \Cc_k$, among the $2^N-1$ members of $\CH$, along with their respective activation durations $T_j, j = 1\dots k$, so that $\sum_{j=1}^{k}T_j$ is minimized, subject to the requirement that all stored traffic is successfully delivered.
It is important to stress that the problem input does not include the explicit knowledge of the $2^N-1$ rate vectors.
If these vectors are all computed a priori, solving the problem reduces to optimizing a linear program (of which the size is exponential in $N$).
What is provided instead in the problem input is the function $F$, that can be viewed as a black box, or an ``oracle'', that returns the rate values for any given $\Cc \in \CH$.
Thus a scheduling algorithm is regarded of exponential complexity,  if in the algorithm the number of times that function $F$ is invoked is exponential in $N$.
We assume that the computation of $F$ is practically efficient, that is, one function evaluation $F(i, \Cc)$ of any group $\Cc \in \CH$ and $i \in \Cc$ runs in polynomial time in $N$.
Note that from a communication/information-theoretic perspective the rate values represent any feasible, or achievable, rates for a given channel with specific coding, modulation and detection structures.
Thus the treatment of the problem is decoupled from the physical-layer aspects of it, although it is directly connected to, and dependent on, them.

One specific scenario of interest is the highly symmetric case in which the rate is determined completely by group cardinality, and hence all group members share the same rate.
That is, $F$ is a function of $|\Cc|$ but not of its individual members.
It corresponds to a system where all receivers are located at a central point, with $N$ transmitters having the same distance (on a circle) to the center with identical geometric channel gain.
Such a case is considered in \cite{BoEp06}.
This special case is much more structured, and it is possible to derive strong results on tractability and optimality characterization of the optimal solution.
When the rates depend only on group size, the input can be equivalently defined using an $N$-dimensional rate vector $\Br = (r_1, \dots, r_N)$, each denoting the common rate of every link in a group of size $1 \dots N$ respectively.
Rate monotonicity then implies that $r_1 \geq \dots \geq r_N$.
We will subsequently use the input triplet $(N, \Bd, \Br)$ to refer to this problem case.

%The conventions for mathematical notation are as follows.  Letters in
%capital or small represent numbers, variables, indexes, and functions. The
%calligraphic style, both for capital and small letters, denotes
%sets. Symbols in boldface denote matrices and vectors.

\section{The Rate Function $F$}

Thus far, the minimum-length scheduling problem has been presented in a rather generic form, that is the function $F$ could be completely arbitrary, provided it satisfies the monotonicity property.
If we assume that transmission is successful at some given rate on a link it means that for a fairly broad class of channel models and receiver structures that the signal-to-interference-plus-noise ratio (SINR) at the receiver must exceed a certain threshold \cite{Andrea05}.
Specifically, if a channel matrix $\BG$ of dimension $N \times N$, is provided, where its element $G_{ij}$ is the channel gain between the transmitter of link $i$ and the receiver of link $j$ and if $P_i$ denotes the power of link $i$, and $\sigma^2$ the noise variance, then for link $i$ in group $\Cc$ the SINR is given by

\begin{equation}
\label{eq:sinr}
\gamma_{i\Cc} = {\frac{P_iG_{ii}}{\ds \sum_{k \in \Cc, k \not=i}P_kG_{ki} + \sigma^2}}~.
\end{equation}

The treatment of the scheduling problem in this paper does not depend on a {\em specific form} of the rate function $F$ and, hence, it applies to emptying $N$ backlogged queues in minimum time for {\em any} system, not limited to wireless links on a common channel. For this specific context, two commonly used modeling approaches for defining $F$ are as follows.

The first is a one-step function returning either zero (no success) or one (success) as the rate value. Indeed, many of the previous studies of scheduling in wireless networks use implicitly this function
(e.g. \cite{BjVaYu03,BjVaYu04}).
In effect, a transmission of a packet is successful if and only if the SINR meets a threshold $\gamma^*$.
A group $\Cc$ such that all of its links can successfully transmit is sometimes referred to as a {\em
 feasible matching}.
Clearly, an infeasible matching will not be part of the optimal schedule, since if it would be used, it is de facto replaced by a subset of members having the SINR condition satisfied.
An equivalent view is, in the definition of $F$, to set zero rates for {\em all} elements of any infeasible matching.
Thus the following definition of $F$ provides the SINR-threshold-based model of scheduling. In the sequel, we use $F_{\mathbb B}$ to denote this binary function.

\vspace{-8mm}
$$ \begin{array}{ll}
\\[1em] r_{i\Cc} = F_{\mathbb B}(i, \Cc) & = \left\{
\begin{array}{ll} 1 & \textrm{if $i \in \Cc$ and $\gamma_{j\Cc} \geq \gamma^*, \forall j \in \Cc$,} \\
                  0 & \textrm{otherwise.}
\end{array} \right.
\end{array}
$$

\vspace{2mm}
%Instead of transmitter-receiver pairs, some authors have consider
%nodes and links forming a general graph topology
%\textcolor{red}{[References]}. A single interface per node is commonly
%assumed, meaning that a node can transmit or receive, but not both at
%the same time. In other words, any two links being adjacent are not
%eligible to be in the same group. This constraint can be effectively
%accounted for by a natural augmentation of the definition of feasible
%matching, and the above function definition $F_{\mathbb B}$ remains
%valid.

The definition can be further generalized to account for rate adaptation.
In this case, the rate values form a discrete set with cardinality higher than two.
Each rate value is associated with an SINR threshold, often obtained from the available adaptive
modulation and coding schemes of some specific wireless system \cite{Proakis00}.
The generalization corresponds to $F$ being a step-wise function taking multiple values.

The second commonly used modeling approach is to consider the rate as a continuous function of the SINR \cite{Andrea05}.
We will use $F_{\mathbb C}$ as a general notation of the wide class of continuous functions that are (strictly) monotonically increasing in the SINR.
A particular case of interest is the Shannon formula for the additive white gaussian noise (AWGN) channel. This case will be referred to as $F_{\mathbb S}$, and is given by

\begin{equation}
\label{eq:shannon}
r_{i\Cc} = F_{\mathbb S}(i, \Cc) = \log_2(1 + \gamma_{i\Cc}).
\end{equation}

The aforementioned property of rate monotonicity clearly holds for both $F_{\mathbb B}$ and $F_{\mathbb C}$.
For $F_{\mathbb B}$, we have $r_{i{\Cc_1}} = r_{i{\Cc_2}}$, for two groups $\Cc_1 \subset \Cc_2$ and $i \in \Cc_1 \cap \Cc_2$, if and only if both are feasible matchings or both are infeasible matchings.
If $\Cc_1$ is feasible but $\Cc_2$ is not, $1=r_{i{\Cc_1}} > r_{i{\Cc_2}} = 0, i \in \Cc_1 \cap \Cc_2$.
For $F_{\mathbb C}$, strict inequality $r_{i{\Cc_1}} > r_{i{\Cc_2}}$ holds as long as $\Cc_1 \subset \Cc_2$.

%Rate functions $F_{\mathbb B}$ and $F_{\mathbb S}$ are frequently encountered in the literature. The generality of our framework is
%evidenced from the above elucidation, namely, they are just special
%cases within our treatment of minimum-length scheduling.  Since the
%remainder of the paper focuses on scheduling in a wireless networking
%context, we will use links to refer to the elements in $\CN$.
%However, except for the simulation results, the validity of the paper
%extends to the task of emptying $N$ backlogged queues as fast as
%possible in {\em any system}.

\section{Linear Programming Formulation}
\label{sec:linear}

The scheduling problem is easily shown to be equivalent to a linear program (LP).
Although formulating the LP does not give a practically feasible solution algorithm, it enables us to gain structural insights.
Denote by $\BT = T_\Cc, \Cc \in \CH$ the non-negative scheduling decision vector of dimension
$2^N-1$, whose element $T_{\Cc}$ denotes the time duration of running group $\Cc \in \CH$.
We use $T^*$ to denote an optimal scheduling solution.
Notation $\CH^*$ is reserved for a set of groups that correspond to an optimum solution, that is, $\CH^* = \{\Cc \in \CH: T_\Cc^* > 0\}$.
By the following lemma, all demands will be met exactly at optimum. This is rather intuitive and has been (implicitly) taken for granted (e.g., \cite{BoEp06}).
Formalizing this result is useful in our case, as it eliminates any doubt about the validity of the form of LP basic solutions to be discussed later.

\begin{lemma}
\label{theo:equality}
There exists an optimal schedule such that, before reaching the end of the time duration of a group, none of the link queues in the group is empty.
\end{lemma}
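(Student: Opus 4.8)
The plan is to pass to the linear-programming form of the problem and run a length-preserving exchange argument. Write the scheduling problem as $\min \sum_{\Cc \in \CH} T_\Cc$ subject to $\sum_{\Cc \ni i} r_{i\Cc}\,T_\Cc \geq d_i$ for all $i \in \CN$ and $T_\Cc \geq 0$, and let $L^*$ denote its optimal value; since all variables are nonnegative, the sublevel sets of the objective within the feasible region are compact, so an optimal solution exists. I would first establish the stronger structural fact that there is an optimal solution $T^*$ for which (a) every demand constraint is tight, $\sum_{\Cc \ni i} r_{i\Cc}\,T^*_\Cc = d_i$ for all $i \in \CN$, and (b) no group $\Cc$ with $T^*_\Cc > 0$ contains a link $i$ with $r_{i\Cc} = 0$. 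Granting this, the lemma follows at once: order the groups of $\CH^* = \{\Cc : T^*_\Cc > 0\}$ in any way and run them back to back; for a fixed link $i$, property (b) makes its queue strictly decreasing on the activation intervals of the groups containing $i$ and constant elsewhere, while property (a) makes the total served equal to $d_i$, so the cumulative service never exceeds $d_i$ and the queue of $i$ reaches zero only at the very end of the last group containing it. Hence, on the activation interval of every group $\Cc \ni i$, the queue of $i$ is continuous, strictly decreasing, and nonnegative at the right endpoint, so it is strictly positive before that endpoint, which is exactly the claim.

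To produce $T^*$ I would argue by extremality. The set of optimal solutions is the face $\{T \geq 0 : \sum_{\Cc} T_\Cc = L^*,\ \sum_{\Cc \ni i} r_{i\Cc}\,T_\Cc \geq d_i\ \forall i\}$, which is compact, so I may pick among optimal solutions one minimizing the auxiliary potential $\Psi(T) = \sum_{\Cc \in \CH} |\Cc|\,T_\Cc$. Suppose this minimizer violated (a) or (b), witnessed by a group $\Cc$ with $T_\Cc > 0$ and a link $i \in \Cc$ that is either over-served or has $r_{i\Cc} = 0$. If $|\Cc| \geq 2$, I would shift a small amount $\varepsilon > 0$ of duration from $\Cc$ to $\Cc \setminus \{i\}$, i.e.\ decrease $T_\Cc$ and increase $T_{\Cc \setminus \{i\}}$ by $\varepsilon$: by the rate monotonicity property, $r_{k,\Cc\setminus\{i\}} \geq r_{k\Cc}$, so the service of every remaining link $k \in \Cc \setminus \{i\}$ does not decrease; the service of $i$ changes by $-r_{i\Cc}\varepsilon$, which for sufficiently small $\varepsilon$ keeps $i$'s constraint satisfied (using that $i$ is over-served, or that $r_{i\Cc} = 0$); the schedule length is unchanged, so the new point is still optimal; but $\Psi$ decreases by exactly $\varepsilon$, contradicting minimality of $\Psi$. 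If instead $\Cc = \{i\}$ is a singleton, it cannot have $r_{ii} = 0$ (that activation would serve nobody and could be deleted, contradicting optimality), so $i$ is over-served, and decreasing $T_{\{i\}}$ slightly strictly lowers the objective without affecting any other constraint, again contradicting optimality. Thus (a) and (b) hold.

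The step I expect to need the most care is precisely the choice of the exchange and of the progress measure. The tempting move, repeatedly splitting a group at the instant an over-served link empties and replacing the tail by the smaller group, does keep the schedule optimal, but the natural progress measure, the total over-service $\sum_i \big(\sum_{\Cc \ni i} r_{i\Cc}\,T_\Cc - d_i\big)$, need not decrease: dropping link $i$ from $\Cc$ can raise the rates of the links that remain and thus over-serve them instead, and this faster draining may cascade and even create fresh empty-queue events later in the schedule. Pairing the length-preserving duration shift with the potential $\Psi = \sum_{\Cc} |\Cc|\,T_\Cc$ sidesteps this, because every admissible shift strictly decreases $\Psi$ while leaving the length optimal, so a $\Psi$-minimizing optimal solution admits no such shift and must satisfy (a) and (b). The remaining ingredients, the compactness of the optimal face and the monotone-queue bookkeeping that turns (a) and (b) into the stated property, are routine.
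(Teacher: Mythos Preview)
Your argument is correct. The core local move you use---shifting duration from an active group $\Cc$ to $\Cc\setminus\{i\}$ and invoking rate monotonicity---is exactly the paper's device: the paper takes an optimal schedule with a violating pair $(\Cc,i)$, splits the activation of $\Cc$ at the instant $i$'s queue empties, and replaces the tail by $\Cc\setminus\{i\}$, observing that total length is preserved and all other links are served at least as well. Where the two proofs diverge is in how the argument is closed off. The paper performs a single such split and asserts the lemma, leaving implicit why iteration terminates; you correctly point out that this is the delicate step, since dropping $i$ can accelerate the remaining links and spawn new empty-queue events, so neither the number of violations nor the total over-service is an obvious monotone. Your fix---minimizing the potential $\Psi(T)=\sum_{\Cc}|\Cc|\,T_{\Cc}$ over the compact optimal face and noting that every admissible length-preserving shift strictly decreases $\Psi$---is a clean way to force termination in one stroke, and it also lets you handle the degenerate case $r_{i\Cc}=0$ (your property~(b)), which the paper's formula $t=d_i/r_{i\Cc}$ silently excludes. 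So the underlying exchange is the same, but your packaging via extremality is more careful and yields the slightly stronger conclusion (tight demand constraints plus no zero-rate members in active groups) that the paper then uses to justify the equality form of the LP.
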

\begin{proof}
Suppose the opposite is true.
Then there exists a group $\Cc$ run with time duration $T_\Cc>0$ and link $i \in \Cc$, such that the demand served of $i$ in the group, denoted by $d_i$, satisfies the condition $d_i < r_{i\Cc}T_\Cc$. Let $t = \frac{d_i}{r_{i\Cc}}$.
Consider splitting the running time $T_\Cc$ in two segments, with lengths $t$ and $T-t$ respectively. In the first segment, group $\Cc$ is run, and for the second segment, the reduced group $\Cc \setminus \{i\}$ is run.
The lemma follows from two observations.
First, the served demand of $i$ in segment one remains $d_i$.
Second, any of the links other than $i$ is served for an overall time of $T$, and their rates in $\Cc \setminus \{i\}$ are not worse, if not better, than those in $\Cc$.
\end{proof}
\vspace{2mm}

By Lemma \ref{theo:equality}, we arrive at the following LP formulation.

\begin{subequations}
\label{eq:lp}
\begin{align}
\min~~ & \sum_{\Cc \in \CH} T_\Cc, \label{eq:lpobj}\\
\text{s.~t.}~~ & \sum_{\Cc \in \CH} r_{i\Cc} T_\Cc = d_i ~~i=1, \dots, N \label{eq:lpcons},\\
& \BT \geq 0.
\end{align}
\end{subequations}

As \eqref{eq:lpcons} are equalities, the formulation is in so called standard LP form, hence no slack or surplus variables will be involved in constructing matrix bases or the corresponding basic solutions.

Even though there are $2^N-1$ candidate groups, we can conclude the existence of an optimal scheduling solution using at most $N$ groups.
The result follows from the fundamental optimality theory of LP and the structure of \eqref{eq:lp}.

\begin{lemma}
\label{theo:ngroups}
There exists an optimal scheduling solution using at most $N$ groups, i.e., $|\CH^*| \leq N$.
\end{lemma}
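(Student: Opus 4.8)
The plan is to invoke the fundamental theorem of linear programming: since the LP \eqref{eq:lp} is feasible (the trivial schedule that runs each singleton group $\{i\}$ for duration $d_i/r_{ii}$ satisfies all constraints, and $r_{ii}>0$ because the demand vector is strictly positive and $F$ returns nonnegative rates with $r_{ii}>0$ implicitly required for the problem to be solvable) and the objective is bounded below by $0$, an optimal solution exists; moreover, among optimal solutions there is a \emph{basic} optimal solution, i.e., a basic feasible solution that is optimal.

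The key step is then to read off the structure of a basic feasible solution of \eqref{eq:lp}. Because the constraints \eqref{eq:lpcons} are written in standard equality form with $N$ rows and no slack or surplus variables (as the excerpt explicitly notes), the constraint matrix has $N$ rows, so any basis consists of at most $N$ columns, one column per group $\Cc$ with $T_\Cc$ in the basis. In a basic feasible solution every nonbasic variable is zero, so the support $\{\Cc : T_\Cc > 0\}$ is contained in the (at most $N$) basic columns. Applying this to a basic \emph{optimal} solution $T^*$ gives $|\CH^*| = |\{\Cc : T^*_\Cc > 0\}| \le N$, which is exactly the claim.

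First I would state the feasibility and boundedness of \eqref{eq:lp} and cite the existence of a basic optimal solution. Then I would describe a basic feasible solution: pick a set $\mathcal{B} \subseteq \CH$ of at most $N$ groups whose rate columns $(r_{i\Cc})_{i=1}^N$ are linearly independent, set the remaining $T_\Cc = 0$, and observe that the support is contained in $\mathcal{B}$. Finally I would apply this to an optimal basic solution to conclude $|\CH^*| \le N$. It is worth remarking that the rank of the constraint matrix could be less than $N$ only in degenerate situations, but this only makes the bound tighter, so no harm is done.

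The main obstacle, such as it is, is bookkeeping rather than mathematics: one must be careful to confirm that \eqref{eq:lp} is indeed in standard form with exactly $N$ equality constraints and no auxiliary variables, so that a basis truly has at most $N$ elements and the support of a basic feasible solution is contained in the basic index set — the excerpt has already done this groundwork, so the proof reduces to a clean citation of LP basic-solution theory. A secondary point to handle cleanly is the existence of \emph{some} feasible solution (needed before one can speak of an optimal basic one), which follows immediately from the singleton schedule mentioned above.
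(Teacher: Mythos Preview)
Your proposal is correct and follows essentially the same approach as the paper: establish feasibility via the singleton (TDMA) schedule, invoke the existence of a basic optimal solution for the LP in standard form, and conclude that the support of such a solution has at most $N$ elements because there are exactly $N$ equality constraints. The paper's proof is slightly terser but identical in substance.
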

\begin{proof}
Note that the feasible region of \eqref{eq:lp} is non-empty as the $N$ single-link groups would provide a feasible schedule.  This corresponds to a TDMA-based activation of the links one-at-a-time until each empties its queue.
Hence, by the fundamentals of linear programming (e.g., \cite{Mu83}), there exists an optimal basic solution.
For any feasible basic solution, the number of positive values is no more than the number of rows $N,$ and equals $N$ if the solution is non-degenerate, and the lemma follows.
\end{proof}

By Lemma \ref{theo:ngroups}, there is always a compact representation of optimality.
However, finding this best combination of $N$ groups, among the $2^N-1$ candidate ones, remains generally hard (see also Section \ref{sec:complexity}).
In this regard, optimal scheduling has a combinatorial side, even if formulation \eqref{eq:lp} is an LP.

In some of the analysis later on, we utilize the LP dual of \eqref{eq:lp}. Letting $\pi_i$ denote the dual variable of \eqref{eq:lpcons},
the dual formulation is as follows.

\begin{subequations}
\label{eq:dual}
\begin{align}
\max~~ & \sum_{i \in \CN} d_i \pi_i, \label{eq:dualobj}\\
\text{s.~t.}~~ & \sum_{i \in \Cc} r_{i\Cc} \pi_i \leq 1~~\Cc \in \CH, \label{eq:dualcons} \\
& \Bpi \geq 0.
\end{align}
\end{subequations}

\section{Complexity Considerations}
\label{sec:complexity}

Complexity is a fundamental aspect in the treatment of optimization problems.
By Lemma \ref{theo:ngroups}, obtaining the globally optimal schedule is equivalent to selecting the $N$ ``best'' groups.
The question is how difficult this selection task is.
For a discrete rate function $F_{\mathbb B}$, the problem is ${\cal NP}$-hard \cite{AnDi09,BjVaYu04,GoPsWa07}.
A natural follow-up question is whether the complexity reduces for continuous (and thus much more well-behaved) functions of class $F_{\mathbb C}$.
For example, is the problem tractable if the rate function is $F_{\mathbb S}$, or, even simply linear (regardless of the fact that it would not be realistic) in SINR?
%To our knowledge, this aspect is not clear from the current literature.
In the following, we provide a negative answer, stating that the problem in the wireless communications context is in general hard for {\em all} rate functions that are continuous and strictly increasing in the SINR.

\begin{theorem}
\label{theo:hard}
Given any function $F_{\mathbb C}$ of the SINR, there are {\cal NP}-hard instances of the minimum-length scheduling problem.
\end{theorem}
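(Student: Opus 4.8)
The plan is to reduce from a known $\mathcal{NP}$-hard problem, most naturally the feasibility/scheduling problem under the binary rate function $F_{\mathbb B}$, which is $\mathcal{NP}$-hard by the results cited just before the statement (e.g.\ \cite{AnDi09,BjVaYu04,GoPsWa07}). The conceptual difficulty is that an arbitrary continuous, strictly increasing $F_{\mathbb C}$ has no sharp ``threshold'' behavior, so one cannot directly encode the hard combinatorial structure of an $F_{\mathbb B}$ instance into the rate values. The idea is to recover an effective threshold by a \emph{scaling} argument: given a hard $F_{\mathbb B}$ instance with channel matrix $\BG$, powers $\{P_i\}$, noise $\sigma^2$ and threshold $\gamma^*$, build an instance of the continuous problem in which the SINR values are driven either far above or far below the point where $F_{\mathbb C}$ takes an essentially ``on'' versus ``off'' value, so that in the limit the continuous model behaves like the binary one.

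Concretely, I would first fix any two SINR levels $\gamma_{\mathrm{hi}} > \gamma_{\mathrm{lo}} > 0$ and note that, since $F_{\mathbb C}$ is strictly increasing, $F_{\mathbb C}(\gamma_{\mathrm{hi}}) > F_{\mathbb C}(\gamma_{\mathrm{lo}})$; call these values $r_{\mathrm{hi}} > r_{\mathrm{lo}} > 0$. Next I would perturb/scale the powers (or equivalently rescale $\sigma^2$ and the off-diagonal gains) so that every group $\Cc$ that was a \emph{feasible matching} in the original $F_{\mathbb B}$ instance has $\gamma_{i\Cc} \ge \gamma_{\mathrm{hi}}$ for all $i\in\Cc$, while every \emph{infeasible} matching has at least one link with $\gamma_{i\Cc} \le \gamma_{\mathrm{lo}}$. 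Achieving this cleanly is the main obstacle: in a general $\BG$ the SINR of a group is a fixed rational function of the powers, so I cannot move the SINRs of different groups independently. The way around it is to choose a structured hard family — ideally the one underlying the cited hardness proof, which is typically built from a combinatorial gadget (e.g.\ an independent-set or graph-coloring style construction) in which ``feasible'' groups are exactly independent sets and interference between conflicting links can be made arbitrarily strong. Taking the conflicting cross-gains $\to\infty$ (or the useful gains $\to\infty$ relative to them) sends infeasible groups' SINR to $0$ and feasible groups' SINR to $+\infty$, and by continuity of $F_{\mathbb C}$ we can pick a finite point in this family where the separation $\gamma_{\mathrm{hi}},\gamma_{\mathrm{lo}}$ holds simultaneously for all $2^N-1$ groups.

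With that separation in hand, I would bound the optimal schedule length of the continuous instance against that of the binary one. Any schedule that uses only ``good'' groups runs each link at rate $\ge r_{\mathrm{lo}}$, so the optimum of the continuous instance is at most $\mathrm{OPT}_{\mathbb B}/r_{\mathrm{lo}}$ after rescaling demands appropriately; conversely, using a ``bad'' group forces some link's rate down to at most $F_{\mathbb C}(\gamma_{\mathrm{lo}})$, and by driving $\gamma_{\mathrm{lo}}$ low enough (hence $r_{\mathrm{lo}}/r_{\mathrm{hi}}$ small enough) any schedule touching a bad group is strictly longer than the best schedule using only good groups. This creates a gap large enough that an exact (or sufficiently good approximate) solver for the continuous minimum-length scheduling problem would reveal which groups the optimum uses, and by Lemma~\ref{theo:ngroups} the optimum uses at most $N$ of them — thereby solving the original $\mathcal{NP}$-hard matching/scheduling problem in polynomial time. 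I would finish by checking that the constructed instance is polynomial-size, that $F_{\mathbb C}$ need only be evaluated polynomially often (the oracle model of Section~\ref{sec:system}), and that rate monotonicity is preserved under the scaling, so the reduction is legitimate within the paper's framework.
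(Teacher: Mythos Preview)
Your high-level strategy --- engineer a channel matrix so that ``good'' groups (independent sets in an underlying conflict graph) have uniformly high SINR while ``bad'' groups have at least one link with very low SINR --- is the same as the paper's. The paper also builds a graph gadget with $G_{ij}=G_{ji}=1$ on edges and $G_{ij}=0$ off edges, then tunes self-gains, powers and noise so that a link grouped only with non-neighbours achieves rate exactly $1$, while a link grouped with any neighbour achieves rate strictly below $1/N$.

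Where your argument has a real gap is the step ``by driving $\gamma_{\mathrm{lo}}$ low enough \dots\ any schedule touching a bad group is strictly longer than the best schedule using only good groups.'' This does not follow from a mere separation $r_{\mathrm{hi}}>r_{\mathrm{lo}}$. First, a bad group may be bad only for \emph{some} of its members: links not adjacent to anyone else in the group still enjoy the high rate, so the group can be genuinely useful for them. Second, an optimal schedule could use a bad group for an arbitrarily small duration, so no fixed ratio $r_{\mathrm{lo}}/r_{\mathrm{hi}}$ excludes it by itself. The paper closes this gap with a precise quantitative choice: the ``low'' rate is forced below $1/N$ (by taking $v=F^{-1}(1/N)$ and choosing $\sigma^2$, $P_i$, $G_{ii}$ accordingly). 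Then, for any group $\Cc$ containing an edge, one isolates the connected component $\Cm\subseteq\Cc$ (of size $m\le N$) containing that edge and replaces $\Cc$, run for time $T$, by the $m$ groups $(\Cc\setminus\Cm)\cup\{k\}$, $k\in\Cm$, each run for time $T/m$. Every link in $\Cm$ now gets rate $1$ for time $T/m\ge T/N$, strictly exceeding the demand $<T/N$ it received in $\Cc$; every link in $\Cc\setminus\Cm$ is served for the same total time $T$ at no worse a rate. This local-improvement lemma is what forces the optimum onto independent sets, after which the scheduling problem is exactly weighted fractional coloring on $\CG$ with vertex weights $\Bd$. So you correctly identified the gadget and the need for scale separation, but you are missing the specific threshold $1/N$ and the connected-component splitting that turn the separation into an actual exclusion of bad groups; the clean target of the reduction is fractional coloring directly, rather than a detour through $F_{\mathbb B}$-scheduling.
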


\begin{proof}
Given $(N, \Bd, F)$, where $F$ is of type $F_{\mathbb C}$, the recognition version of the problem, by Lemma \ref{theo:ngroups}, is as follows.
Are there $N$ groups, which can be represented using a binary $N \times N$ matrix, such that the total time of satisfying $\Bd$ using these groups is at most a given positive number?
The problem is clearly in class ${\cal NP}$, as checking the validity of a solution (a certificate in form of a square matrix of size $N$) is straightforward %\footnote{The discussion potentially involves the complexity aspect of providing the exact value (or, arbitrarily close approximation) for non-rational but real numbers, which is beyond the scope of our investigation.  In all practical context, the rate values are clearly given in finite precision. Thus, rational numbers are assumed here for all values, although we do not impose such a restriction in general in the paper.}.
Consider a general-topology graph $\CG = (\CV, \CE)$. Let $N = |\CV|$. Thus a link in the scheduling instance corresponds to a vertex in $\CG$.
Let $v = F^{-1}(\frac{1}{N})$, and $u = \frac{F^{-1}(1)}{F^{-1}(\frac{1}{N})}$, i.e., $F^{-1}(1) = vu$, with $u>1$ because $F$ is strictly increasing in SINR.  Let $\sigma^2 = \frac{1}{u}$. For each edge $(i,j)$ in the graph, set the coupling element $G_{ij}=G_{ji}=1$. Moreover, $G_{ii} = G = \min \{v, 1.0\}, i \in \CN$. All other elements of the channel matrix are zeros.
Finally, the transmit power $P_i = \frac{v}{G_{ii}}, i \in \CN$.

Consider link $i$ and any group that contains $i$, but not any of the  adjacent vertexes in $\CG$.
The SINR is $vu = F^{-1}(1)$, thus the rate is 1.0. If $i$ is put in a group containing at least one adjacent vertex in $\CG$, the SINR is no more than $v/(\frac{v}{G_{ii}} + \frac{1}{u}) < v = F^{-1}(\frac{1}{N})$, because $\frac{v}{G_{ii}}\geq 1$ and $u>0$.
Thus the rate of $i$ becomes strictly less than $\frac{1}{N}$.
Suppose, at optimum, a group $\Cc$ containing two links $i$ and $j$, that are adjacent in $\CG$, has a positive amount of time duration $T>0$.
Note that, in $\CG$, $\Cc$ corresponds to at least one connected component (because $i$ and $j$ are adjacent).
Denote by $\Cm \subseteq \CV$ the component containing $i$ and $j$, and let $m = |\Cm|$.
Note that $m \geq 2$.
By the observation before, for each of the links in $\Cm$, including $i$ and $j$, the demand served in time $T$ within group $\Cc$ is strictly less than $\frac{T}{N}$.

Consider splitting group $\Cc$ into $m$ groups, obtained by combining $\Cc \setminus \Cm$ with each of the individual links in $\Cm$.
Each of the $m$ groups is given time $\frac{T}{m}$.
For all links in $\Cm$, including $i$ and $j$, the rate grows from less than $\frac{1}{N}$ to $1$.
Since $m \leq N$, the quantity $\frac{T}{m}$ is strictly more than enough to serve demand $\frac{T}{N}$, for any link in $\Cm$.
For the links in $\Cc \setminus \Cm$, overall they are served with the same time duration $T$, with rate no less than before.
Repeat the argument for the remaining components if necessary.
In conclusion, there is an optimal scheduling solution in which the groups are formed by links corresponding to independent sets of $\CG$.
At this stage, it is apparent that solving the scheduling problem provides the correct answer to the weighted fractional coloring problem \cite{LuYa94}, with the demand vector $\Bd$ being the weights of the vertexes, and the result follows.
\end{proof}

Theorem \ref{theo:hard} establishes the inherent difficulty of the scheduling problem.
The result generalizes the observation made in \cite{BoLiXi10} on the connection between fractional coloring and scheduling under the so called protocol model, which uses a conflict graph and disregards the channel matrix.
As our result applies to {\em any} $F_{\mathbb C}$, one should not expect that the use of smooth rate functions, including linear ones, would help in reducing complexity.

\section{Optimality Conditions for Base Scheduling Strategies}
\label{sec:condition}

We consider two base strategies that are the most simple choices in constructing a scheduling solution.
In the first strategy, denoted by $\CH^1$, the link queues are emptied completely separately, corresponding to a TDMA activation.
That is, $\CH^1 = \{\{1\}, \{2\}, \dots, \{N\}\}$.
The second strategy, denoted by $\CH^N$,  applies the very opposite philosophy, namely, all links are activated at once, and the $N$-links group is served until some of the queues becomes empty.
The next group consists in all the links having positive remaining demand, and so on.

Note that both strategies have size $N$, and hence represent basic solutions (extreme points of the polytope in the LP formulation).
Given $N$ out of the $2^N-1$ groups, the computing time of the correct time share (or concluding that the $N$ groups do not form a feasible schedule) is normally of complexity $O(N^3)$ due to matrix inversion.
Solutions $\CH^1$ and $\CH^N$ are simpler to construct -- after $N$ calls of function $F$, computing the $\CH^1$ schedule runs clearly in linear time, whereas for the $\CH^N$ schedule the computing time is of $O(N^2)$.

Intuitively, strategy $\CH^1$ is desirable, if the links, when activated simultaneously with others, experience significant rate reduction. This corresponds to a high-interference environment.
The following condition quantifies the notion.

\begin{condition}
\label{cond:sep}
For all $\Cc \in \CH$, the sum of the ratios between the members' rates in $\Cc$ and their respective rates
of being served individually, is at most 1.0, that is,
$$
\sum_{i \in \Cc} \frac{r_{i\Cc}}{r_{ii}} \leq 1~~ \forall \Cc \in \CH.
$$
\end{condition}

The above condition is simple in structure. Yet, it is exact in characterizing the optimality of $\CH^1$.

\begin{theorem}
\label{theo:sep}
$\CH^1$ is optimal if and only if Condition \ref{cond:sep} holds.
\end{theorem}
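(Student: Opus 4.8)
The plan is to prove both directions using LP duality, since $\CH^1$ being optimal means the basic solution it induces is optimal for the LP \eqref{eq:lp}, and by complementary slackness this is equivalent to the existence of a dual-feasible $\Bpi$ that is complementary to $\CH^1$. First I would compute the primal solution associated with $\CH^1$: running each singleton $\{i\}$ for time $d_i/r_{ii}$ gives objective $\sum_{i} d_i/r_{ii}$. Because this primal solution activates exactly the groups $\{1\},\dots,\{N\}$, complementary slackness forces the companion dual vector to satisfy the constraints \eqref{eq:dualcons} with equality on all singletons, i.e. $r_{ii}\pi_i = 1$, so the only candidate dual point is $\pi_i = 1/r_{ii}$ for every $i$. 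The whole theorem then reduces to the statement: this specific $\Bpi$ is dual-feasible if and only if Condition \ref{cond:sep} holds.

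For the ``if'' direction I would simply substitute $\pi_i = 1/r_{ii}$ into the general dual constraint \eqref{eq:dualcons}: for a group $\Cc$ it reads $\sum_{i\in\Cc} r_{i\Cc}/r_{ii} \le 1$, which is exactly Condition \ref{cond:sep}. Non-negativity of $\Bpi$ is automatic. Hence Condition \ref{cond:sep} makes $\Bpi$ dual feasible; together with the primal solution $\CH^1$ (which is primal feasible and complementary to $\Bpi$ by construction, since for a singleton the constraint is tight and for any $\Cc\notin\CH^1$ we have $T_\Cc=0$), weak-duality/complementary-slackness certifies optimality of $\CH^1$. For the ``only if'' direction I would argue the contrapositive: if Condition \ref{cond:sep} fails, there is a group $\Cc$ with $\sum_{i\in\Cc} r_{i\Cc}/r_{ii} > 1$, and I want to show $\CH^1$ is strictly suboptimal. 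The cleanest way is to exhibit a better schedule directly: replace a slice of the TDMA schedule by a short run of $\Cc$. Concretely, run $\Cc$ for a small time $\epsilon$ and then finish each link $i$ individually for time $(d_i - r_{i\Cc}\epsilon)/r_{ii}$; the total length is $\epsilon + \sum_i (d_i - r_{i\Cc}\epsilon)/r_{ii} = \sum_i d_i/r_{ii} - \epsilon\big(\sum_{i\in\Cc} r_{i\Cc}/r_{ii} - 1\big)$, which is strictly smaller than the $\CH^1$ length for any $\epsilon>0$ small enough that all residual demands stay non-negative. Thus $\CH^1$ is not optimal, completing the contrapositive.

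An equivalent and perhaps more uniform route for ``only if'' is to note that optimality of the basic solution $\CH^1$ is, by LP theory, equivalent to all reduced costs being non-negative, and the reduced cost of entering group $\Cc$ into the basis $\{\{1\},\dots,\{N\}\}$ is precisely $1 - \sum_{i\in\Cc} r_{i\Cc}/r_{ii}$; Condition \ref{cond:sep} is exactly the non-negativity of every reduced cost. I would probably present the primal-exchange version in the body since it is self-contained and gives an explicit improving direction, and mention the reduced-cost interpretation as a remark. I expect the main obstacle to be purely bookkeeping: verifying that the improving schedule in the ``only if'' part is genuinely feasible (each link's served demand never exceeds $d_i$, each duration stays non-negative) for sufficiently small $\epsilon$, and making sure the degenerate case — where some residual demand hits zero exactly — is handled by taking $\epsilon$ strictly below $\min_i d_i/r_{i\Cc}$ over $i\in\Cc$. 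Rate monotonicity guarantees $r_{ii} \ge r_{i\Cc} > 0$ for $i\in\Cc$ whenever $r_{i\Cc}>0$, so all the ratios are well defined and the argument goes through; groups containing a link with zero rate contribute zero to the sum and are irrelevant.
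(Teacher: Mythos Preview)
Your proposal is correct and rests on the same LP-optimality backbone as the paper, but the packaging differs slightly. The paper argues both directions via reduced costs: with basis $\CH^1$ the inverse is diagonal with entries $1/r_{ii}$, so the reduced cost of any non-basic $T_\Cc$ equals $1-\sum_{i\in\Cc} r_{i\Cc}/r_{ii}$; Condition~\ref{cond:sep} is then literally non-negativity of all reduced costs, giving sufficiency, and for necessity the paper notes that $\CH^1$ is non-degenerate (all $d_i>0$), so a strictly negative reduced cost yields a strictly improving simplex pivot. Your sufficiency argument via the explicit dual vector $\pi_i=1/r_{ii}$ and complementary slackness is the same computation in dual language (your $\Bpi$ is exactly the simplex multiplier $\Be'\BB^{-1}$). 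The genuine difference is in the necessity direction: instead of invoking a non-degenerate pivot, you build an explicit improving schedule by running $\Cc$ for a small $\epsilon$ and finishing each link individually. This buys you a fully elementary, self-contained argument that does not appeal to simplex mechanics; the paper's route is shorter if one takes LP pivoting as a black box. Since you already note the reduced-cost interpretation as an equivalent remark, there is no gap---just be sure, as you say, to take $\epsilon<\min_{i\in\Cc:\,r_{i\Cc}>0} d_i/r_{i\Cc}$ so that all residual demands remain strictly positive.
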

\begin{proof}
Sufficiency: Consider the LP formulation \eqref{eq:lp}, and the base matrix $\BB$ for the basic solution $\CH^1$.
The inverse matrix $\BB^{-1}$ is diagonal with $\text{diag}(\BB^{-1}) = (1/r_{11}, \dots, 1/r_{NN})'$, where by $\Ba'$ we denote the transpose of vector $\Ba$. For any non-basic variable $T_{\Cc}$ with $|\Cc| \geq 2$, the reduced cost equals $1 - \Be' \BB^{-1} {\Br_{\Cc}}$, where $\Be'$ is a row vector of $N$ ones and $\Br_{\Cc}$ denotes the column vector corresponding to $T_{\Cc}$ in \eqref{eq:lp}.

The expression leads to value $1 - \sum_{i \in \Cc} \frac{r_{i\Cc}}{r_{ii}}$, that is non-negative if Condition \ref{cond:sep} holds.
Since none of the $2^N-N-1$ non-basic variables has strictly negative reduced cost, $\CH^1$ is optimal, by LP optimality. \\
Necessity: If Condition \ref{cond:sep} does not hold for some group $\Cc$, the reduced cost of the corresponding non-basic variable is strictly negative.
Moreover, for $\CH^1$, all the basic variables have strictly positive values.
Therefore the LP pivot operation of bringing in $T_{\Cc}$ into the base is not degenerate, meaning that the objective function will strictly improve, and the result follows.
\end{proof}

Theorem \ref{theo:sep} provides a complete answer to the optimality of $\CH^1$.
The condition consists of one inequality per group.
From the proof, it is clear that reducing the number of inequalities is not possible.
However, if we relax the requirement of necessity, and consider a pair of links, there is a simpler sufficient condition that excludes the activation of both in any group.
This occurs, as formulated below, if the two links generate high interference to each other, but their rates are not much affected by simultaneous transmissions of the other links.

\begin{condition}
\label{cond:septwo}
For a pair of links $i, j \in \CN$, we define the following inequality.
$$
\frac{r_{i,\{i,j\}}}{r_{i, \CN \setminus \{j\} }} + \frac{r_{j,\{i,j\}}}{r_{j, {\CN \setminus \{i\}}}} \leq 1.
$$
\end{condition}

\begin{theorem}
\label{theo:septwo}
If Condition \ref{cond:septwo} is true, then there exists $\CH^*$ in which $i$ and $j$ do not appear together in any group, that is, in optimizing the schedule, the condition is sufficient for discarding all groups containing both $i$ and $j$.
\end{theorem}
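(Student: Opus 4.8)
The plan is to use a local splitting (exchange) argument, in the spirit of Lemma~\ref{theo:equality} and of the splitting steps used in the proof of Theorem~\ref{theo:hard}: starting from an arbitrary optimal schedule with durations $T_\Cc^*$, I would (in a single pass) replace every group $\Cc$ with $T_\Cc^*>0$ and $\{i,j\}\subseteq\Cc$ -- call such a group \emph{bad} -- by two groups not containing both $i$ and $j$, without increasing the total length. First I would dispose of the degenerate case $r_{i,\{i,j\}}=0$: then $r_{i\Cc}=0$ for every bad $\Cc$ by rate monotonicity (since $\{i,j\}\subseteq\Cc$), so running $\Cc\setminus\{i\}$ for time $T_\Cc^*$ in place of $\Cc$ serves $i$ no worse (rate $0$ either way) and every other member no worse, and $\Cc\setminus\{i\}$ is not bad; the symmetric case $r_{j,\{i,j\}}=0$ uses $\Cc\setminus\{j\}$. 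In the remaining cases Condition~\ref{cond:septwo} forces all four rates in its statement to be positive, and then rate monotonicity gives $r_{i,\Cc\setminus\{j\}}\ge r_{i,\CN\setminus\{j\}}>0$ and $r_{j,\Cc\setminus\{i\}}\ge r_{j,\CN\setminus\{i\}}>0$ for every bad $\Cc$, so all the quantities below are well defined.

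The core step is: for a non-degenerate bad group $\Cc$ of duration $T_\Cc^*$, run $\Cc\setminus\{j\}$ for time $t_1:=\frac{r_{i\Cc}}{r_{i,\Cc\setminus\{j\}}}T_\Cc^*$ and $\Cc\setminus\{i\}$ for time $t_2:=T_\Cc^*-t_1$ in place of running $\Cc$ for time $T_\Cc^*$. Since $r_{i\Cc}\le r_{i,\Cc\setminus\{j\}}$ by monotonicity, $0\le t_1\le T_\Cc^*$ and $t_2\ge 0$, and the total duration is preserved. I would then check demands: link $i$, now appearing only in $\Cc\setminus\{j\}$, receives exactly $r_{i,\Cc\setminus\{j\}}t_1=r_{i\Cc}T_\Cc^*$, i.e.\ exactly its old contribution; every other member $k\in\Cc\setminus\{i,j\}$ appears in both new groups and is served for total time $t_1+t_2=T_\Cc^*$ at rates at least $r_{k\Cc}$, hence receives at least $r_{k\Cc}T_\Cc^*$; and link $j$, appearing only in $\Cc\setminus\{i\}$, receives $r_{j,\Cc\setminus\{i\}}t_2$, which is at least $r_{j\Cc}T_\Cc^*$ precisely when $t_2\ge\frac{r_{j\Cc}}{r_{j,\Cc\setminus\{i\}}}T_\Cc^*$, i.e.\ precisely when $\frac{r_{i\Cc}}{r_{i,\Cc\setminus\{j\}}}+\frac{r_{j\Cc}}{r_{j,\Cc\setminus\{i\}}}\le 1$.

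The one place Condition~\ref{cond:septwo} is used is to verify this last inequality: since $\{i,j\}\subseteq\Cc$, monotonicity gives $r_{i\Cc}\le r_{i,\{i,j\}}$ and $r_{j\Cc}\le r_{j,\{i,j\}}$, and since $\Cc\setminus\{j\}\subseteq\CN\setminus\{j\}$ and $\Cc\setminus\{i\}\subseteq\CN\setminus\{i\}$ it gives $r_{i,\Cc\setminus\{j\}}\ge r_{i,\CN\setminus\{j\}}$ and $r_{j,\Cc\setminus\{i\}}\ge r_{j,\CN\setminus\{i\}}$, so term by term $\frac{r_{i\Cc}}{r_{i,\Cc\setminus\{j\}}}+\frac{r_{j\Cc}}{r_{j,\Cc\setminus\{i\}}}\le \frac{r_{i,\{i,j\}}}{r_{i,\CN\setminus\{j\}}}+\frac{r_{j,\{i,j\}}}{r_{j,\CN\setminus\{i\}}}\le 1$. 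Hence every bad group can be eliminated this way; doing so for all bad groups at once yields a feasible schedule of the same total length -- hence an optimal one -- using only groups that do not contain both $i$ and $j$, which is the claimed $\CH^*$. If one additionally wants the demands met with equality, one trims the possibly over-served links afterwards exactly as in the proof of Lemma~\ref{theo:equality}, an operation that only deletes links from groups and so cannot recreate a group containing both $i$ and $j$.

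I expect the substantive work to be purely bookkeeping: making sure every denominator used is strictly positive (this is exactly what the degenerate-case reduction together with rate monotonicity buys), and being careful that ``serves at least as much'' rather than ``exactly as much'' is harmless -- over-delivering traffic to the links $k\neq i,j$ (and to $j$) does not violate feasibility, so there is no need to rebalance $t_1,t_2$ beyond the single constraint $t_1+t_2=T_\Cc^*$. There is no convergence issue, since all bad groups are handled in one pass and no new bad group is ever created.
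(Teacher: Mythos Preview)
Your proposal is correct and follows essentially the same exchange argument as the paper: replace any group $\Cc\ni i,j$ by the pair $\Cc\setminus\{j\}$ and $\Cc\setminus\{i\}$, using rate monotonicity (numerators bounded by $r_{i,\{i,j\}},r_{j,\{i,j\}}$; denominators bounded below by $r_{i,\CN\setminus\{j\}},r_{j,\CN\setminus\{i\}}$) together with Condition~\ref{cond:septwo} to certify the split is feasible without increasing total time. Your choice $t_2=T_\Cc^*-t_1$ (rather than the paper's $t_1=r_{i\Cc}/r_{i,\CN\setminus\{j\}}$, $t_2=r_{j\Cc}/r_{j,\CN\setminus\{i\}}$, which may leave slack) makes the verification for the remaining links $k\in\Cc\setminus\{i,j\}$ more explicit, and your separate handling of the degenerate zero-rate cases is a welcome addition, but the substance is the same.
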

\begin{proof}
Suppose an optimal schedule has a group $\Cc$ having both $i$ and $j$.
Without loss of generality, assume the time duration of $\Cc$ is 1.0.
The demands served equal $r_{i\Cc}$ and $r_{j\Cc}$ for the two links, respectively.
By the property of rate monotonicity we have, $r_{i\Cc} \leq r_{i, \{i,j\}}$ and $r_{j\Cc} \leq r_{i, \{i,j\}}$, which yields the following inequality
$$
\frac{r_{i \Cc}}{r_{i, \CN \setminus \{j\} }} + \frac{r_{j \Cc}}{r_{j, {\CN \setminus \{i\}}}} \leq 1.
$$
Consider, instead of $\Cc$, two groups $\Cc \setminus \{j\}$ and $\Cc \setminus \{i\}$. The rates of $i$ and $j$ are at least $r_{i, \CN \setminus \{j\}}$ and $r_{j, \CN \setminus \{i\}}$, respectively.
Activating the two groups for time durations $r_{i \Cc}/r_{i, \CN \setminus \{j\}}$ and $r_{j \Cc}/r_{j, \CN \setminus \{i\}}$ delivers respectively no less than $r_{i\Cc}$ and $r_{j\Cc}$ as served demands for $i$ and $j$, hence the conclusion.
\end{proof}

\begin{remark}
Both theorems \ref{theo:sep} and \ref{theo:septwo} significantly extend previous results of the optimality characterization of two links (see \cite{PaEp09} and the references therein).
In fact, for two links, $\CH^1$ is optimal if Condition \ref{cond:sep} holds for $\Cc = \{i,j\}$, otherwise $\CH^N$ is optimal.
\end{remark}

For $(N, \Bd, \Br)$, i.e., the symmetric case of cardinality-based rates, the number of inequalities in Condition \ref{cond:sep} is reduced to $N$.
This, together with the proof of Theorem \ref{theo:sep}, lead to the following corollary.

\begin{corollary}
\label{corr:tdma-cond}
The following condition is both sufficient and necessary for the optimality of $\CH^1$ for $(N, \Bd, \Br)$.
$$
m r_m \leq r_1~~m=2,\dots, N.
$$
\end{corollary}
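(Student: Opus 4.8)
The plan is to specialize Theorem \ref{theo:sep} to the cardinality-based case $(N, \Bd, \Br)$ and show that the $2^N-1$ inequalities of Condition \ref{cond:sep} collapse to the $N-1$ inequalities $m r_m \le r_1$, $m = 2, \dots, N$. First I would observe that when the rate depends only on group size, a group $\Cc$ with $|\Cc| = m$ assigns rate $r_m$ to each of its $m$ members, and the individual rate of any link is $r_1$ (the rate in a singleton group). Hence for such a group the left-hand side of Condition \ref{cond:sep} is
$$
\sum_{i \in \Cc} \frac{r_{i\Cc}}{r_{ii}} = \sum_{i \in \Cc} \frac{r_m}{r_1} = \frac{m r_m}{r_1},
$$
so Condition \ref{cond:sep} for $\Cc$ is exactly $m r_m \le r_1$. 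Since this depends only on $m = |\Cc|$, all groups of a given cardinality give the same inequality, and the quantifier ``$\forall \Cc \in \CH$'' reduces to ``$\forall m \in \{1, \dots, N\}$''. The case $m = 1$ is the trivial $r_1 \le r_1$, so only $m = 2, \dots, N$ remain.

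With this reduction in hand, the corollary follows immediately from Theorem \ref{theo:sep}: the stated $N-1$ inequalities are simply a restatement of Condition \ref{cond:sep} in this special case, and Theorem \ref{theo:sep} already establishes that Condition \ref{cond:sep} is both sufficient and necessary for the optimality of $\CH^1$. One small point worth checking is that the proof of Theorem \ref{theo:sep} goes through verbatim here --- it does, since nothing in that argument used anything about $F$ beyond the values $r_{i\Cc}$ and $r_{ii}$, which are well-defined in the symmetric model.

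I do not anticipate a real obstacle; the only thing to be careful about is bookkeeping on which cardinalities contribute a nontrivial constraint (confirming that $m=1$ is vacuous and that cardinalities $2$ through $N$ each genuinely occur, i.e. there is at least one group of each such size, which is immediate since $N \ge m$ for all $m$ in range). The argument is essentially a one-line substitution into the reduced-cost expression $1 - \sum_{i \in \Cc} \frac{r_{i\Cc}}{r_{ii}}$ appearing in the proof of Theorem \ref{theo:sep}, followed by the observation that rate monotonicity $r_1 \ge \cdots \ge r_N$ does not by itself force $m r_m \le r_1$, so the condition is a genuine extra requirement rather than an automatic consequence of the model.
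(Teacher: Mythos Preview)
Your proposal is correct and follows essentially the same approach as the paper: the paper simply notes that in the cardinality-based case Condition~\ref{cond:sep} collapses to one inequality per group size, and then invokes Theorem~\ref{theo:sep}. Your write-up spells out the substitution $\sum_{i \in \Cc} r_{i\Cc}/r_{ii} = m r_m / r_1$ and the discarding of the trivial $m=1$ case, which is exactly the content the paper leaves implicit.
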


The structure of Condition \ref{cond:septwo} also simplifies for $(N, \Bd, \Br)$.
In addition, by augmenting the line of arguments in the proof of Theorem \ref{theo:septwo}, we arrive at a sufficient condition for excluding the use of any group of a specific size $m$.
This fact is given in the corollary below.

\begin{corollary}
\label{corr:Hn-cond}
For $(N, \Bd, \Br)$ and a given group size $m \in [2, N]$, if the following condition holds for at least one $m'<m$, there is an optimal schedule not using any group of size $m$.
$$
m r_m \leq m' r_{m'}.
$$
\end{corollary}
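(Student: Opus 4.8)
The plan is to prove Corollary~\ref{corr:Hn-cond} by a local exchange (swap) argument on an optimal schedule, exactly in the spirit of the proof of Theorem~\ref{theo:septwo} but carried out at the level of group \emph{sizes} rather than at the level of a single link pair. One starts from any optimal schedule; by the definition of the problem it uses finitely many groups (and by Lemma~\ref{theo:ngroups} one may even take at most $N$ of them). It then suffices to show that every appearance of a group of size $m$ can be replaced by a collection of groups of size $m'$ without increasing the total length and without changing the amount of demand served for any link; iterating over the finitely many size-$m$ groups produces the claimed optimal schedule.

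The main steps are as follows. Suppose the optimal schedule runs a group $\Cc$ with $|\Cc|=m$ for a duration $T>0$. Because in $(N,\Bd,\Br)$ the rate depends only on cardinality, each link of $\Cc$ receives exactly $r_m T$ units of service from this activation. Fix an index $m'<m$ with $mr_m\le m'r_{m'}$ and set $t:=r_mT/r_{m'}$; since $r_{m'}\ge r_m$ we have $t\le T$. Replace $\Cc$ by running \emph{every} $m'$-element subset of $\Cc$ for the common duration $\tau:=t/\binom{m-1}{m'-1}$. Each link of $\Cc$ lies in exactly $\binom{m-1}{m'-1}$ such subsets, so it is active for total time $\binom{m-1}{m'-1}\tau=t$ and hence receives $r_{m'}t=r_mT$ units of service, precisely what it got from $\Cc$; links outside $\Cc$ are untouched, so every demand is still met. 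The length added by the replacement is
$$
\binom{m}{m'}\,\tau \;=\; \frac{\binom{m}{m'}}{\binom{m-1}{m'-1}}\,t \;=\; \frac{m}{m'}\,t \;=\; \frac{m\,r_m}{m'\,r_{m'}}\,T \;\le\; T ,
$$
where the last inequality is the hypothesis. Thus the new schedule is no longer than the old one, hence still optimal, and it contains one fewer group of size $m$ (the new subgroups all have size $m'<m$). Since the hypothesis $mr_m\le m'r_{m'}$ does not depend on the schedule, the same surgery applies to each remaining size-$m$ group, and repeating finishes the proof.

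The only non-routine ingredient is the covering step: realizing the $m$ links of $\Cc$ as a union of size-$m'$ activations with a controlled total duration. The uniform ``run every $m'$-subset for equal time'' construction above does this with the exact figure $\tfrac{m}{m'}t$; alternatively one can invoke McNaughton's wrap-around rule to achieve the same total length with only a linear number of subgroups (at every instant exactly $m'$ of the $m$ links are jointly active, which is possible because $m'\le m$). I would also record the two easy sanity checks: the case $m'=1$ is a plain TDMA replacement among the $m$ links, consistent with Corollary~\ref{corr:tdma-cond}, and the surgery never creates a group of size $m$, so the iteration terminates.
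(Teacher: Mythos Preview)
Your argument is correct and follows exactly the exchange/replacement approach the paper itself indicates (the paper does not spell out a proof but says the result is obtained ``by augmenting the line of arguments in the proof of Theorem~\ref{theo:septwo}''); your uniform-subset construction is a clean explicit instantiation of that idea, and the binomial identity $\binom{m}{m'}/\binom{m-1}{m'-1}=m/m'$ gives precisely the required length bound.
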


\vspace{2mm}
By defining {\em sum-rate} as the amount of data served per time unit, Corollaries \ref{corr:tdma-cond} and \ref{corr:Hn-cond} have natural interpretations. The former corollary indicates that it is beneficial to use a TDMA-based schedule when any grouping of links results in lower sum-rate than single activation, while the latter one states that if there is a group size dominated in sum-rate by a smaller one, then there will be no optimal schedule selecting the larger size group.

Let us consider when it is preferable to augment the size of a group (of any size, except $N$).
Intuitively, one can expect that the group should be augmented with a new link, if the resulting sum-rate, is higher than that of any time combination of running the group and the link separately.
Conversely, if it is optimal to activate group $\Cc$, then the sum-rate of $\Cc$, namely $\sum_{i \in \Cc} r_{i \Cc}$, can not be achieved by any combined use of its $|\Cc|$ subsets of size $|\Cc|-1$.
The insight leads to the following condition.

\begin{condition}
\label{cond:notsep}
Given group $\Cc$, let $n = |\Cc|$ and denote by $\Cc_{\breve 1}, \Cc_{\breve 2}, \dots, \Cc_{\breve n}$ its $n$ subsets of cardinality $n-1$, obtained by deleting each of the $n$ links of $\Cc$.
Denote by $\Br_{\Cc} \in  {\mathbb R}_+^{n}$ the vector of rates of the links in $\Cc$, and $\Br_{\breve i} \in {\mathbb R}_+^{n}$ the corresponding rate vector for $\Cc_{\breve i}$ (with zero rate for $i$).
We define the following condition: For any $\Blambda = (\lambda_1, \dots, \lambda_{n})' \in {\mathbb R}_+^{n}$ with $\Be' \Blambda = 1$, the vector inequallity $\sum_{i \in \Cc} \lambda_i \Br_{\breve i} \leq \Br_{\Cc}$ is satisfied for at least one element.
\end{condition}

\begin{remark}
Note that finding whether or not there exists a $\Blambda$ vector that violates the condition can be formulated as an LP of size $O(n)$.
Thus the condition can be checked efficiently for any given group.
\end{remark}

What the above condition states is, in fact, that the rate vector of $\Cc$ can not be outperformed by the throughput region of the $n$ sub-groups.
If group $\Cc$ is active at optimum, then the condition must be true, as formulated below.

\begin{theorem}
\label{theo:notsep}
If $\Cc \in \CH^*$, then Condition \ref{cond:notsep} holds.
\end{theorem}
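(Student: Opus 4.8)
The plan is to argue by contradiction via an exchange argument, exactly paralleling the structure of the proofs of Theorems \ref{theo:septwo} and \ref{theo:hard}: if Condition \ref{cond:notsep} fails for $\Cc$, then there is a convex combination of the $(n-1)$-subgroups whose aggregate service, run for a suitable total time, dominates what $\Cc$ delivers per unit time, so any optimal schedule using $\Cc$ can be strictly improved (or at least matched with a schedule not using $\Cc$), contradicting $\Cc \in \CH^*$ or showing $\Cc$ was unnecessary.

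First I would suppose $\Cc \in \CH^*$ but Condition \ref{cond:notsep} is violated. By definition of the condition, there exists $\Blambda = (\lambda_1,\dots,\lambda_n)' \in {\mathbb R}_+^n$ with $\Be'\Blambda = 1$ such that the vector inequality $\sum_{i\in\Cc}\lambda_i \Br_{\breve i} \leq \Br_\Cc$ fails in \emph{every} coordinate, i.e. $\sum_{i\in\Cc}\lambda_i \Br_{\breve i} > \Br_\Cc$ componentwise (strictly). Next I would normalize: let the optimal duration of $\Cc$ be $T_\Cc > 0$, and consider replacing the single activation of $\Cc$ for time $T_\Cc$ by activating each subgroup $\Cc_{\breve i}$ for time $\lambda_i T_\Cc$. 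The total time spent is $\sum_i \lambda_i T_\Cc = T_\Cc$, so the schedule length is unchanged so far. The demand delivered to link $k \in \Cc$ under the replacement is $\sum_{i\in\Cc} \lambda_i r_{k,\Cc_{\breve i}} T_\Cc = T_\Cc \big(\sum_i \lambda_i \Br_{\breve i}\big)_k > T_\Cc (r_{k\Cc})$, which strictly exceeds the demand $r_{k\Cc} T_\Cc$ that $\Cc$ delivered to $k$. For links outside $\Cc$, the subgroups $\Cc_{\breve i}$ are subsets of $\Cc$ (none of them contains a link not in $\Cc$), so they receive zero service from both the old and the new pieces; and by rate monotonicity, running the replacement does not hurt any link elsewhere in the overall schedule. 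Hence every link in $\Cc$ is now strictly over-served while no link is under-served. One can then shave a positive amount of time off one of the subgroup activations (reducing each $\lambda_i T_\Cc$ slightly while keeping all $\Cc$-demands met, using continuity and the strict inequalities), producing a strictly shorter feasible schedule — contradicting optimality of $\CH^*$. Therefore Condition \ref{cond:notsep} must hold.

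I would then double-check one subtlety that is the main place the argument can go wrong: the replacement must not \emph{remove} service that some link was relying on from the group $\Cc$ in a way that cannot be recovered. Since $\Cc_{\breve i} \subset \Cc$ for every $i$, the set of links that could possibly be served by the new pieces is contained in $\Cc$; so no link outside $\Cc$ loses anything, and the argument about over-serving the links inside $\Cc$ is self-contained. Also note that the $r_{k,\Cc_{\breve k}}$ term vanishes (link $k$ is absent from $\Cc_{\breve k}$), which is already accounted for by the zero entry in $\Br_{\breve k}$; the strict componentwise inequality still holds because it is assumed to fail in every coordinate of the vector inequality in Condition \ref{cond:notsep}. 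The only genuine obstacle is making the ``shave off a positive amount of time'' step precise — I expect to handle it by observing that the strict inequalities $\big(\sum_i \lambda_i \Br_{\breve i}\big)_k > r_{k\Cc}$ give a uniform slack $\delta > 0$, so scaling the vector $(\lambda_1,\dots,\lambda_n)$ by a factor $(1-\epsilon)$ for small enough $\epsilon > 0$ still meets all demands of $\Cc$'s links while strictly shortening the schedule by $\epsilon T_\Cc$. This completes the contradiction and hence the proof.
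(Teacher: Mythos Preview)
Your proof is correct and follows essentially the same exchange argument as the paper: the paper's proof simply observes that if the strict componentwise inequality $\sum_{i}\lambda_i \Br_{\breve i} > \Br_\Cc$ holds, then running the subgroups $\Cc_{\breve 1},\dots,\Cc_{\breve n}$ with time proportions $\lambda_i$ serves the demand $T\Br_\Cc$ in strictly less than time $T$, contradicting optimality. Your version spells out the $(1-\epsilon)$ scaling step and the check that links outside $\Cc$ are unaffected, both of which the paper leaves implicit, but the underlying idea is identical.
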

\begin{proof}
Suppose group $\Cc$ is activated with any positive time $T$.
Strictly inequality $\sum_{i \in \Cc} \lambda_i \Br_{\breve i} > \Br_{\Cc}$ in all the $n$ elements means that running $\Cc_{\breve 1}, \Cc_{\breve 2}, \dots, \Cc_{\breve n}$, with time proportions $\lambda_i, i=1,\dots,n$, respectively, will serve demand $T \Br_{\Cc}$ within less time than $T$, and the result follows.
\end{proof}

We now turn our attention to the scheduling strategy $\CH^N$.
In this solution, the $N$ groups, which are easily identified, are of sizes $N, N-1, \dots, 1$.
To save notation without loss of generality, assume link $1$ has its queue emptied first, followed by link $2$ in the second group, and so on.
Applying Theorem \ref{theo:notsep} yields immediately the following necessary condition for the optimality
of $\CH^N$.

\begin{corollary}
\label{eq:notsepequal}
If $\CH^N$ is optimal, then Condition \ref{cond:notsep} must be true for the $N-1$ groups $\{1,\dots, N\}, \{2,\dots, N\}, \dots$, and $\{N-1, N\}$.
\end{corollary}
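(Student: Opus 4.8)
The plan is to obtain this as an immediate corollary of Theorem~\ref{theo:notsep}. The observation driving the argument is that when the $\CH^N$ strategy is optimal, the $\CH^N$ schedule itself is an optimal solution, and the $N-1$ of its groups that have size at least $2$ all lie in $\CH^*$; Theorem~\ref{theo:notsep} then applies to each of them and yields exactly the claimed necessary conditions.

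First I would fix $T^*$ to be the schedule produced by the $\CH^N$ rule, which is optimal by hypothesis, and recall that, under the labelling convention that link $k$ is the queue emptied while group $\{k,\dots,N\}$ is active, this schedule consists precisely of the $N$ nested groups $\{1,\dots,N\},\{2,\dots,N\},\dots,\{N-1,N\},\{N\}$. The single point to check is that $T^*_{\{k,\dots,N\}}>0$ for $k=1,\dots,N-1$, i.e.\ that the solution is non-degenerate in these coordinates. This is forced by the strict positivity of $\Bd$ together with the construction: the group $\{k,\dots,N\}$ is, by definition, made up of the links with positive remaining demand after the first $k-1$ stages and is run exactly until link $k$ hits zero backlog, so a zero duration would mean link $k$ was already empty at the start of stage $k$, contradicting the labelling. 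Hence $\{k,\dots,N\}\in\CH^*$ for each $k=1,\dots,N-1$.

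With that in hand, for each such $k$ the group $\Cc=\{k,\dots,N\}$ has cardinality $n=N-k+1\ge 2$, so its $n$ deletion-subgroups of cardinality $n-1$ are genuine (nonempty) groups and Condition~\ref{cond:notsep} is well-posed for $\Cc$; since $\Cc\in\CH^*$, Theorem~\ref{theo:notsep} gives directly that Condition~\ref{cond:notsep} holds for $\Cc$. The remaining group of $\CH^N$, the singleton $\{N\}$, has $n=1$ and no subset of cardinality $n-1=0$, so the condition is vacuous there; this is why only $N-1$ groups appear in the statement.

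I expect the only mildly delicate step to be the non-degeneracy claim, i.e.\ confirming that each of the $N-1$ larger groups of the $\CH^N$ schedule really carries strictly positive duration, equivalently ruling out degenerate ties in which two queues empty simultaneously and a stage collapses. As noted, $\Bd>0$ and the stage-by-stage definition of $\CH^N$ already preclude this (and in any case one could relabel to keep exactly one emptied link per stage); everything else is a direct appeal to Theorem~\ref{theo:notsep}, one group at a time.
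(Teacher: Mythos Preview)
Your approach is the same as the paper's: the corollary is stated there as an immediate consequence of Theorem~\ref{theo:notsep}, with no further argument. So on the main line you are aligned with the source.

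Where you go beyond the paper is in the non-degeneracy discussion, and there your resolution is not quite right. It is \emph{not} true that strict positivity of $\Bd$ forces each stage of $\CH^N$ to have positive duration: if two queues happen to empty simultaneously during some stage (nothing in the general setting $(N,\Bd,F)$ prevents $d_i/r_{i\Cc}=d_j/r_{j\Cc}$), the next nested group has zero duration, and no relabelling can manufacture a strictly positive stage out of a genuine tie. In that degenerate case the zero-duration group is not in $\CH^*$, and Theorem~\ref{theo:notsep} does not apply to it. The paper simply does not address this and treats the corollary as immediate; when it later needs the converse direction (Theorem~\ref{theo:necessaryequal}) it explicitly adds the hypothesis ``non-degenerate $\CH^N$''. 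So the honest reading is that the corollary is meant under an implicit non-degeneracy assumption, and your attempt to argue it away should be replaced by either that assumption or the observation that the conclusion is claimed only for those nested groups that actually carry positive time.
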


Consider the implication of Condition \ref{cond:notsep} for cardinality-based rates $(N, \Bd, \Br)$.
Because of the rate symmetry, the quantity $\sum_{i \in \Cc} \lambda_i \Br_{\breve i}$ can attain maximum simultaneously in all the $n$ elements, only if $\lambda_i = \frac{1}{n}$ for all $i=1,\dots, n$. For this $\Blambda$,
all elements of $\sum_{i \in \Cc} \lambda_i \Br_{\breve i}$ equal
$\frac{n-1}{n} r_{n-1}$, resulting in the observation below.

\begin{corollary}
If $\CH^N$ is optimal for $(N, \Bd, \Br)$, then the following condition must hold.
$$
(m-1)r_{m-1} \leq mr_m~~ m =2, \dots, N.
$$
\end{corollary}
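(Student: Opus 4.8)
The plan is to derive the stated inequalities as a direct specialization of Corollary~\ref{eq:notsepequal} to the cardinality-based case $(N, \Bd, \Br)$. By Corollary~\ref{eq:notsepequal}, optimality of $\CH^N$ forces Condition~\ref{cond:notsep} to hold for each of the nested groups $\{1,\dots,N\}, \{2,\dots,N\}, \dots, \{N-1,N\}$, whose cardinalities run through $N, N-1, \dots, 2$. So it suffices to show that, for a group $\Cc$ with $|\Cc| = n$ and $n \in \{2,\dots,N\}$, Condition~\ref{cond:notsep} under rate symmetry already implies $(n-1)r_{n-1} \leq n r_n$; relabelling $n$ as $m$ and ranging over all these groups then yields the corollary.

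First I would evaluate the vectors that appear in Condition~\ref{cond:notsep} in the symmetric setting. For a group $\Cc$ with $|\Cc| = n$, each size-$(n-1)$ subgroup $\Cc_{\breve i}$ serves every one of its $n-1$ members at the common rate $r_{n-1}$ and serves $i$ at rate $0$; ordering the $n$ coordinates by the links of $\Cc$, the vector $\Br_{\breve i}$ thus has $r_{n-1}$ in every coordinate except the $i$-th, which is $0$. Consequently, for any $\Blambda = (\lambda_1,\dots,\lambda_n)' \geq 0$ with $\Be' \Blambda = 1$, the $j$-th coordinate of $\sum_{i \in \Cc} \lambda_i \Br_{\breve i}$ equals $\sum_{i \neq j} \lambda_i r_{n-1} = r_{n-1}(1 - \lambda_j)$, while every coordinate of $\Br_{\Cc}$ equals $r_n$.

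Next I would test Condition~\ref{cond:notsep} at the admissible witness $\Blambda = \tfrac{1}{n}\Be$. For this choice all $n$ coordinates of $\sum_{i \in \Cc} \lambda_i \Br_{\breve i}$ equal $\tfrac{n-1}{n}r_{n-1}$ and all coordinates of $\Br_{\Cc}$ equal $r_n$, so the requirement that the inequality $\sum_{i \in \Cc} \lambda_i \Br_{\breve i} \leq \Br_{\Cc}$ hold in at least one coordinate collapses to the single scalar inequality $\tfrac{n-1}{n}r_{n-1} \leq r_n$, i.e.\ $(n-1)r_{n-1} \leq n r_n$. Applying this for $n = 2, \dots, N$ gives the corollary. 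I would also add the remark, already anticipated in the paragraph preceding the corollary, that this witness is extremal: since $r_{n-1}(1-\lambda_j)$ is maximized evenly across all $j$ only when the $\lambda_j$ are all equal, Condition~\ref{cond:notsep} on such a group is in fact \emph{equivalent} to $(n-1)r_{n-1} \leq n r_n$, so the necessity statement is sharp.

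I do not expect a real obstacle here; the only points requiring care are the bookkeeping of which coordinate of $\Br_{\breve i}$ is $0$, and verifying that the uniform $\Blambda$ is an extremal (hence sufficient) choice so that a single substitution does the job. As an alternative that avoids Condition~\ref{cond:notsep} altogether, I could argue directly as in the proof of Theorem~\ref{theo:notsep}: if the size-$n$ group of $\CH^N$ is run for time $T$ and $(n-1)r_{n-1} > n r_n$, replace it by its $n$ subgroups $\Cc_{\breve i}$, each run for time $T/n$; then every link of $\Cc$ is served by the $n-1$ subgroups not deleting it, each at rate $r_{n-1}$, receiving $\tfrac{n-1}{n}T r_{n-1} > T r_n$ in the same total time $T$, so the schedule can be strictly shortened, contradicting optimality of $\CH^N$.
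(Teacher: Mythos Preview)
Your proposal is correct and follows essentially the same approach as the paper: the paper derives the corollary from the paragraph immediately preceding it, by specializing Condition~\ref{cond:notsep} to cardinality-based rates, observing that symmetry forces the extremal witness $\lambda_i = \tfrac{1}{n}$, and evaluating the resulting coordinates as $\tfrac{n-1}{n}r_{n-1}$ versus $r_n$. Your write-up spells out exactly this computation (and adds a direct alternative via replacement by subgroups), so there is no substantive difference.
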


The inequalities in the above corollary form a hierarchy of relations with a clean interpretation. Namely, if $\CH^N$ is optimal, then the group sum-rate must be monotonically increasing in group size.
Conversely, if this monotonicity is violated, we conclude $\CH^N$ is not optimal.
However, the reverse formulation does not hold, i.e., the hierarchy of relations is not sufficient for ensuring that $\CH^N$ is optimal.
A counter-example is provided in Section \ref{sec:algoritm-opt}.

To arrive at a sufficient optimality condition for strategy $\CH^N$ for the general problem setting $(N, \Bd, F)$, we consider maximum and minimum rates of groups.
Denote by $r^{max}_m$ and $r^{min}_m$ the maximum and minimum link rates, respectively, of all groups of size $m$.
Although the exact values of them are difficult to calculate in general, in practical systems it is typically possible to derive optimistic respective conservative bounds on the rates, using the function $F$ and the channel matrix $\BG$, as replacements of the maximum and minimum values.

\begin{condition}
\label{cond:sufficientall}
We define the following $N-1$ inequalities,
$$
\frac{1}{r^{min}_{m}} + \frac{1}{r^{min}_{m-2}} \leq  \frac{2}{r^{max}_{m-1}}~~ m = 2,\dots, N.
$$
where, by convention, the term $\frac{1}{r_0^{min}}$, corresponding to $m=2$, is taken to be zero.
\end{condition}

The inequalities in Condition \ref{cond:sufficientall} form a chain for group sizes moving from one to $N$.
By the following theorem, this chain of relations is sufficient for the optimality of schedule $\CH^N$.

\begin{theorem}
\label{theo:sufficientall}
If Condition \ref{cond:sufficientall} holds, then $\CH^N$
is optimal.
\end{theorem}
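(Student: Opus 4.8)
\emph{Proof plan.} The plan is to establish optimality of $\CH^N$ by linear programming duality: I would exhibit the dual solution complementary to the basis $\CH^N$ and check that it is feasible for \eqref{eq:dual}. Write $\Cc_k := \{k, k+1, \dots, N\}$ and $m_k := |\Cc_k| = N-k+1$, so that $\CH^N = \{\Cc_1, \dots, \Cc_N\}$ with $\Cc_1 \supset \cdots \supset \Cc_N$ (link $k$ is emptied while $\Cc_k$ runs). The basis matrix $\BB$ of \eqref{eq:lp} with columns $\Br_{\Cc_k}$ is triangular with positive diagonal entries $r_{k\Cc_k}$, hence invertible, and the associated dual vector $\Bpi$ is the unique solution of the complementary-slackness equations $\sum_{i=k}^N r_{i\Cc_k}\pi_i = 1$, $k = N, N-1, \dots, 1$, which I would solve bottom-up as $\pi_N = 1/r_{NN}$ and, for $k < N$,
\[
\pi_k = \frac{1}{r_{k\Cc_k}}\left(1 - \sum_{i=k+1}^{N} r_{i\Cc_k}\,\pi_i\right).
\]

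The first step is to check $\Bpi \ge 0$, which follows by downward induction on $k$: the equation for $\Cc_{k+1}$ together with rate monotonicity ($r_{i\Cc_k} \le r_{i\Cc_{k+1}}$, since $\Cc_{k+1}\subset\Cc_k$) gives $\sum_{i=k+1}^N r_{i\Cc_k}\pi_i \le \sum_{i=k+1}^N r_{i\Cc_{k+1}}\pi_i = 1$, hence $\pi_k \ge 0$. Thus $\Bpi$ is feasible in sign, its objective $\sum_i d_i\pi_i$ automatically equals the length of $\CH^N$, and — by weak duality — optimality of $\CH^N$ follows once the dual inequality $\sum_{i\in\Cc} r_{i\Cc}\pi_i \le 1$ is verified for every $\Cc \in \CH$, i.e. once no non-basic group has negative reduced cost. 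For the $N$ basic groups this holds with equality by construction, so the entire content lies in the non-basic case, and this is where Condition \ref{cond:sufficientall} must enter.

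For a non-basic $\Cc$ of size $\ell$ I would bound $\sum_{i\in\Cc} r_{i\Cc}\pi_i$ by combining three ingredients. (i) A sandwich on the partial sums $A_k := \sum_{i=k}^N \pi_i$: the $\Cc_k$-equation and $r^{min}_{m_k} \le r_{i\Cc_k} \le r^{max}_{m_k}$ give $1/r^{max}_{m_k} \le A_k \le 1/r^{min}_{m_k}$. (ii) The decomposition of $\Cc$ into its maximal runs of consecutive indices, which expresses $\sum_{i\in\Cc}\pi_i$ as a sum of telescoped differences $A_a - A_{b+1}$, each controlled by (i); in particular, whenever $\Cc$ contains a tail $\{k,\dots,N\} = \Cc_k$, rate monotonicity gives $\sum_{i\ge k} r_{i\Cc}\pi_i \le \sum_{i\ge k} r_{i\Cc_k}\pi_i = 1$, so that part is handled exactly. (iii) $r_{i\Cc} \le r^{max}_{\ell}$ for the remaining members. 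Combining (i)–(iii) reduces the required inequality to a purely numerical statement about the sequences $(r^{min}_m)$ and $(r^{max}_m)$, which I would then derive from the chain of inequalities in Condition \ref{cond:sufficientall} — these say precisely that $1/r^{min}_m$ is ``concave enough'' relative to $1/r^{max}_{m-1}$ — together with the elementary facts $r^{min}_m \le r^{max}_m \le r^{max}_{m-1} \le 2\,r^{min}_m$ (using rate monotonicity, and dropping the nonnegative term $1/r^{min}_{m-2}$ in the $m$-th inequality of the Condition).

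The hard part will be exactly this last reduction. Bounding $\sum_{i\in\Cc} r_{i\Cc}\pi_i$ crudely is too lossy: estimating $\sum_{i\in\Cc}\pi_i$ and $r_{i\Cc}$ separately only yields $\le r^{max}_\ell/r^{min}_\ell \le 2$, not $\le 1$. So the argument must telescope the inequalities of Condition \ref{cond:sufficientall} along the run/tail decomposition of $\Cc$ so that successive terms cancel, and pay particular attention to the groups that lie ``next to'' a basic group — e.g. a basic group with one index replaced by a smaller one — where the bound is tightest and the factor-$2$ slack must be squeezed out by using the full chain of inequalities rather than a single one. As a sanity check I would first carry this through for the symmetric case $(N,\Bd,\Br)$: there $r^{min}_m = r^{max}_m = r_m$, the recursion collapses to $\pi_k = 1/r_{m_k} - 1/r_{m_k-1}$ (with $1/r_0 := 0$), Condition \ref{cond:sufficientall} becomes the discrete concavity $\tfrac{1}{r_m} + \tfrac{1}{r_{m-2}} \le \tfrac{2}{r_{m-1}}$, which makes $\Bpi$ nondecreasing in the index, so the size-$\ell$ group maximizing $\sum_{i\in\Cc}\pi_i$ is the basic group $\Cc_{N-\ell+1}$ and its reduced cost is zero; the bounds above are the robust counterpart of this clean picture.
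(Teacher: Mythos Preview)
Your overall strategy --- LP duality, the triangular basis for $\CH^N$, the dual vector $\Bpi$ fixed by complementary slackness, then verifying $\sum_{i\in\Cc}r_{i\Cc}\pi_i\le 1$ for every $\Cc$ --- is exactly the paper's. Your downward-induction check that $\Bpi\ge 0$ via rate monotonicity, and your symmetric-case sanity check (where Condition~\ref{cond:sufficientall} collapses to discrete concavity of $1/r_m$ and forces $\Bpi$ to be nondecreasing), are both correct; the paper does not spell out the former.

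The gap is precisely where you flag it: the ``hard part,'' dual feasibility for a general non-basic $\Cc$. You propose a direct upper bound via the partial-sum sandwich $1/r^{max}_{m_k}\le A_k\le 1/r^{min}_{m_k}$ together with a run decomposition of $\Cc$, but you do not carry it out, and the crude version (as you note) only gives $\le 2$. The paper handles this step differently and more concretely: it argues by contradiction, and it writes out only a single representative case, the group $\{1,\dots,N-2,N\}$ obtained from a basic group by swapping one index. Assuming that group's dual inequality is violated, the paper chains the violation with the three basic equations for sizes $N-2$, $N-1$, $N$ and the single instance $m=N$ of Condition~\ref{cond:sufficientall} to obtain $r_N^{min}\sum_{i}\pi_i>1$, contradicting the equation for $\CN$; it then asserts that other groups and sizes are ``analogous'' without further detail. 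So the paper's proof is itself only a template --- but that template (three consecutive basic equations plus one instance of the Condition, yielding a contradiction) is tidier and more tractable than the telescoping-over-runs route you sketch, and I would recommend you adopt it rather than try to push the direct bound through.
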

\begin{proof}
Without any loss of generality, assume that, by strategy $\CH^N$, the queue of link $N$ becomes empty first, followed by link $N-1$, and so on.
Thus $\CH^N = \{\{1,\dots, N\}, \{1, \dots, N-1\}, \dots, \{1\}\}$.
In case of any degeneracy, the structure remains, with the only difference that the corresponding time  duration of some of these groups is zero.
All $T$-variables other than those for the $N$ groups are zeros.
Clearly, LP primal feasibility of \eqref{eq:lp} is satisfied by the solution.
Consider the LP dual \eqref{eq:dual}, and, for all the groups in $\CH^N$, set the corresponding rows in the dual to equality, that is,
\begin{subequations}
\label{eq:dualcond}
\begin{align}
\sum_{i=1}^N r_{i \CN} \pi_i & = 1, \label{eq:dualn} \\
\sum_{i=1}^{N-1} r_{i, \CN \setminus \{N\}} \pi_i & = 1, \label{eq:dualnminusone} \\
\sum_{i=1}^{N-2} r_{i, \CN \setminus \{N, N-1\}} \pi_i & = 1, \label{eq:dualnminustwo} \\
\dots & = 1, \nonumber \\
r_{11} \pi_1 & = 1. \label{eq:dualone}
\end{align}
\end{subequations}

The above $N$ equalities uniquely determine a solution to the LP dual \eqref{eq:dual}.
This, together with $\CH^N$, form a pair of dual and primal solutions.
Consider the complementary slackness condition in linear programming.
The condition for \eqref{eq:lpcons} is always satisfied no matter what the values of the dual variables are, since constraints \eqref{eq:lpcons} are equalities.
For the LP dual, complementary slackness holds for the above $N$ rows.
For the rest of rows, the condition is also satisfied because the corresponding $T$-variables are zeros.
In conclusion, the pair of solutions is optimal, if LP dual feasibility holds.

Suppose the derived dual solution is not dual feasible, i.e., at least one of the remaining constraints in \eqref{eq:dual} is violated.
We prove a contradiction, assuming a violated constraint concerning group $\{1, \dots, N-2, N\}$ of size $N-1$. The construction for arriving at a contradiction for other groups of size $N-1$ as well as other group sizes is analogous.

The above assumption of constraint violation means that $\sum_{i=1}^{N-2} r_{i, \CN \setminus \{N-1\}} \pi_i + r_{N, \CN \setminus \{N-1\}} \pi_N >1$. This implies the following inequality.
\begin{equation}
\label{eq:dualother}
r_{N-1}^{max} \sum_{i=1}^{N-2} \pi_i + r_{N-1}^{max} \pi_N > 1.
\end{equation}

Note that \eqref{eq:dualnminusone} gives the inequality
$r_{N-1}^{max} \sum_{i=1}^{N-1} \pi_i \geq 1$. This, together with
\eqref{eq:dualother}, result in

\begin{equation}
\label{eq:dualtemp}
r_{N-1}^{max} \sum_{i=1}^{N} \pi_i > 2 - r_{N-1}^{max} \sum_{i=1}^{N-2} \pi_i.
\end{equation}

Next, from \eqref{eq:dualnminustwo}, we obtain $\sum_{i=1}^{N-2} \pi_i \leq \frac{1}{r_{N-2}^{min}}$.
This observation and \eqref{eq:dualtemp} lead to the inequality below.

\begin{equation}
\label{eq:dualalmost}
\sum_{i=1}^N \pi_i > \frac{2}{r_{N-1}^{max}} - \frac{1}{r_{N-2}^{min}}.
\end{equation}

Scaling \eqref{eq:dualalmost} by $r_{N}^{min}$ along with applying
Condition \ref{cond:sufficientall} result in the following.

\begin{equation}
\label{eq:dualdone}
r_{N}^{min} \sum_{i=1}^N \pi_i > r_{N}^{min} (\frac{2}{r_{N-1}^{max}} - \frac{1}{r_{N-2}^{min}}) \geq 1.
\end{equation}

The strict inequality $r_{N}^{min} \sum_{i=1}^N \pi_i > 1$, however, contradicts \eqref{eq:dualn}, and the theorem follows.
\end{proof}

\vspace{2mm}
For $(N, \Bd, \Br)$, i.e. for the symmetric special case, the result of Theorem \ref{theo:sufficientall} can be strengthened.
We prove Condition \ref{cond:sufficientall} is also necessary for the optimality of $\CH^N$, as long as $\CH^N$ is non-degenerate, that is, all the $N$ groups are run with strictly positive time durations\footnote{For $(N, \Bd, \Br)$, non-degenerate $\CH^N$ is equivalent to all links having different demands.}.

\begin{theorem}
\label{theo:necessaryequal}
For $(N, \Bd, \Br)$ and non-degenerate $\CH^N$, Condition \ref{cond:sufficientall}, which reduces to the following inequalities, is also necessary for the optimality of $\CH^N$.
\begin{equation}
\label{eq:sufficientequal}
\frac{1}{r_{m}} + \frac{1}{r_{m-2}} \leq \frac{2}{r_{m-1}}~~ m = 2,\dots, N.
\end{equation}
\end{theorem}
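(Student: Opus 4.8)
The plan is to combine LP duality with the non-degeneracy hypothesis; sufficiency is already Theorem~\ref{theo:sufficientall} applied to the symmetric case $r^{max}_m=r^{min}_m=r_m$, so only necessity has to be shown. Since $\CH^N$ is a non-degenerate \emph{optimal} basic feasible solution of \eqref{eq:lp}, every reduced cost of \eqref{eq:lp} must be non-negative: if some reduced cost were negative, the associated pivot would be non-degenerate (all basic variables of $\CH^N$ are strictly positive) and would strictly decrease the objective, contradicting optimality. Equivalently, the dual vector $\Bpi$ determined by the basis of $\CH^N$ — i.e. the unique $\Bpi$ with $\sum_{i\in\Cc} r_{i\Cc}\pi_i=1$ for every $\Cc\in\CH^N$, cf.\ \eqref{eq:dualcond} — is \emph{dual feasible}: $\sum_{i\in\Cc} r_{i\Cc}\pi_i\le1$ for all $\Cc\in\CH$. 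Necessity will then follow by exhibiting, whenever an inequality in \eqref{eq:sufficientequal} fails, one group whose dual constraint \eqref{eq:dualcons} is violated by this $\Bpi$.

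First I would write $\Bpi$ explicitly. Because the demands are strictly increasing and every link has rate $r_N$ in the full group, under $\CH^N$ the queue of link $1$ empties first, that of link $2$ next, and so on, so the group of size $m$ in $\CH^N$ is $\{N-m+1,\dots,N\}$ and carries common rate $r_m$. The equalities read $r_m\sum_{i=N-m+1}^{N}\pi_i=1$ for $m=1,\dots,N$, a triangular system with positive diagonal $r_1,\dots,r_N$; subtracting consecutive equations gives $\pi_N=1/r_1$ and $\pi_{N-m+1}=\frac{1}{r_m}-\frac{1}{r_{m-1}}$ for $m=2,\dots,N$, all non-negative by rate monotonicity $r_m\le r_{m-1}$ (so $\Bpi\ge0$ automatically). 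A one-line rearrangement then shows that the inequality $\frac{1}{r_m}+\frac{1}{r_{m-2}}\le\frac{2}{r_{m-1}}$ of \eqref{eq:sufficientequal} is precisely $\pi_{N-m+1}\le\pi_{N-m+2}$ (for $m=2$, where $1/r_0:=0$, it reads $\pi_{N-1}\le\pi_N$). Hence Condition~\ref{cond:sufficientall} in the symmetric case is exactly the chain $\pi_1\le\pi_2\le\cdots\le\pi_N$.

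It remains to show that dual feasibility of $\Bpi$ forces this chain. Suppose instead $\pi_{N-m+1}>\pi_{N-m+2}$ for some $m\in\{2,\dots,N\}$. Take the $\CH^N$ group of size $m-1$, namely $\{N-m+2,\dots,N\}$, for which $\sum_{i=N-m+2}^{N}\pi_i=1/r_{m-1}$, and form the group $\Cc'$ of the same size by replacing link $N-m+2$ with link $N-m+1$. Then $\sum_{i\in\Cc'}\pi_i=\frac{1}{r_{m-1}}+(\pi_{N-m+1}-\pi_{N-m+2})>\frac{1}{r_{m-1}}$, so $r_{m-1}\sum_{i\in\Cc'}\pi_i>1$, i.e. the dual constraint \eqref{eq:dualcons} for $\Cc'$ is violated, contradicting dual feasibility of $\Bpi$. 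Therefore \eqref{eq:sufficientequal} must hold, which is the asserted necessity.

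I expect the one genuinely delicate point to be the first step — justifying that non-degeneracy lets us conclude that \emph{this} basis-induced dual solution is feasible (equivalently, that all reduced costs are $\ge0$), and not merely that \emph{some} optimal dual solution exists. Without non-degeneracy the basis-induced $\Bpi$ can be infeasible while $\CH^N$ is still optimal, which is exactly why the hypothesis is needed and why the converse of Theorem~\ref{theo:sufficientall} would otherwise fail. The triangular solve and the group-swap computation are routine.
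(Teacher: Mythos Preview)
Your proof is correct and follows essentially the same route as the paper: both arguments identify the same witness group (the size-$(m-1)$ group obtained from the $\CH^N$ group of that size by swapping the link about to empty for the link that just emptied) and show its reduced cost is negative whenever \eqref{eq:sufficientequal} fails, which contradicts optimality under non-degeneracy. The paper does this by writing down $\BB^{-1}$ explicitly and computing $1-\Be'\BB^{-1}\Br_\Cc$; you do the equivalent computation in dual form, solving the triangular system for $\Bpi$ and checking $r_{m-1}\sum_{i\in\Cc'}\pi_i>1$. The one genuine addition in your version is the observation that, in the symmetric case, Condition~\ref{cond:sufficientall} is \emph{exactly} the monotonicity $\pi_1\le\pi_2\le\cdots\le\pi_N$ of the basis-induced dual; this is not stated in the paper's proof and ties in nicely with the argument of Theorem~\ref{theo:polynomial}, where the same ordering of optimal dual variables is used to establish tractability of $(N,\Bd,\Br)$.
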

\begin{proof}
As in the proof of Theorem \ref{theo:sufficientall}, assume without loss of generality that the sequence in which the queues empty in $\CH^N$ are $N, N-1, \dots, 1$.
Consider the group of size $m-1$, consisting of $\{1, 2, \dots, m-2, m\}$.
Note that the group is not part of the $\CH^N$ solution.
The base matrix $\BB$ of solution $\CH^N$ is triangular for $(N, \Bd, \Br)$, where column $k$
consists of $k$ consecutive elements of value $r_k$, followed by $N-k$ zeros.
Hence the basis inverse, $\BB^{-1}$, has the following form.

\[ \left( \begin{array}{cccccc}
\frac{1}{r_1} & -\frac{1}{r_2} & 0 &  \dots & 0 & 0\\
0 & \frac{1}{r_2} & -\frac{1}{r_3} & \dots & 0 & 0\\
\dots & \dots & \dots & \dots & \dots & \dots\\
0 & 0 & \dots & 0 & \frac{1}{r_{N-1}} & -\frac{1}{r_{N}} \\
0 & 0 & 0 & \dots & 0 & \frac{1}{r_{N-1}}
\end{array} \right)\]

For the aforementioned group, the linear programming reduced cost, for the basis $\BB$, is given by the expression below.
\begin{align*}
1 - (\frac{1}{r_N} - \frac{1}{r_{N-1}},  \frac{1}{r_{N-1}} - \frac{1}{r_{N-2}}, \dots, \frac{1}{r_1})
(0, \dots, 0, r_{m-1}, 0, r_{m-1}, \dots, r_{m-1})^T = ~& 2 - (\frac{r_{m-1}}{r_m} + \frac{r_{m-1}}{r_{m-2}})
\end{align*}

If the opposite of \eqref{eq:sufficientequal} holds, then the reduced cost is strictly negative.
Thus group $\{1, 2, \dots, m-2, m\}$ is an incoming variable in the simplex algorithm for linear programming.  As long as $\CH^N$ is non-degenerate, the pivot operation bringing in the group into the basis will strictly improve the objective function, and the theorem follows.
\end{proof}

It has been commented earlier that the condition of strictly improving throughput in group size, used in Corollary \ref{eq:notsepequal}, is not sufficient for the optimality of $(N, \Bd, \Br)$, whereas the inequalities given in Theorem \ref{theo:necessaryequal} are. Thus the former is implied by the latter
(in a strict sense, because they are not equivalent). This fact is formally established below.

\begin{corollary}
For $(N, \Bd, \Br)$, $\frac{1}{r_{m}} + \frac{1}{r_{m-2}}
\leq \frac{2}{r_{m-1}}, m = 2,\dots, N$, implies $(m-1)r_{m-1}
\leq mr_m, m =2, \dots, N$.
\end{corollary}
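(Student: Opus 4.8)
The plan is to pass to the reciprocal quantities $s_k := 1/r_k$, which converts the given chain of inequalities into a \emph{discrete concavity} statement that is trivial to exploit. Rate monotonicity for $(N,\Bd,\Br)$ gives $r_1 \ge r_2 \ge \dots \ge r_N > 0$, hence $0 < s_1 \le s_2 \le \dots \le s_N$; adopting the convention $s_0 := 0$ inherited from Condition~\ref{cond:sufficientall}, the hypothesis $\tfrac{1}{r_m}+\tfrac{1}{r_{m-2}} \le \tfrac{2}{r_{m-1}}$ is exactly $s_m - s_{m-1} \le s_{m-1} - s_{m-2}$ for $m = 2,\dots,N$. Writing $\delta_k := s_k - s_{k-1}$ for $k = 1,\dots,N$, this says precisely that the increments satisfy $\delta_1 \ge \delta_2 \ge \dots \ge \delta_N$, while monotonicity of the rates gives $\delta_k \ge 0$ for every $k$.

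Next I would rewrite the conclusion in the same variables. Since $r_{m-1}, r_m > 0$, dividing $(m-1) r_{m-1} \le m\, r_m$ through by $r_{m-1} r_m$ shows it is equivalent to $(m-1) s_m \le m\, s_{m-1}$. Using $s_m = \sum_{k=1}^m \delta_k$ and $s_{m-1} = \sum_{k=1}^{m-1} \delta_k$ and cancelling $(m-1) s_{m-1}$ from both sides, the target reduces to the single scalar inequality $(m-1)\,\delta_m \le \sum_{k=1}^{m-1}\delta_k$.

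This last inequality is then immediate from the first step: since $\delta_1 \ge \delta_2 \ge \dots \ge \delta_m$, each of the $m-1$ terms $\delta_1,\dots,\delta_{m-1}$ is at least $\delta_m$, so their sum is at least $(m-1)\delta_m$. Reversing the equivalences yields $(m-1) r_{m-1} \le m\, r_m$ for all $m = 2,\dots,N$; in particular the endpoint $m=2$ is just $r_1 \le 2 r_2$, which comes from the $m=2$ hypothesis $1/r_2 \le 2/r_1$ together with the convention $s_0 = 0$. I do not anticipate a real obstacle: the only points needing care are tracking the $s_0 = 0$ convention so that the $m=2$ end of the chain is absorbed uniformly with the rest, and noting that every $r_k$ involved is strictly positive so that the divisions and the passage to reciprocals are legitimate — both guaranteed by the standing assumptions on $(N,\Bd,\Br)$.
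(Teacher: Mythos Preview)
Your proof is correct and takes a somewhat different route from the paper's. The paper argues by a direct induction on $m$: from the induction hypothesis $(k-1)r_{k-1}\le kr_k$ (equivalently $\tfrac{1}{r_{k-1}}\ge\tfrac{k-1}{kr_k}$) and the single hypothesis inequality $\tfrac{1}{r_{k+1}}+\tfrac{1}{r_{k-1}}\le\tfrac{2}{r_k}$, it extracts $\tfrac{1}{r_{k+1}}\le\tfrac{k+1}{kr_k}$ and hence $kr_k\le(k+1)r_{k+1}$. You instead pass to the reciprocals $s_k$, recognise the hypothesis as the discrete concavity statement $\delta_1\ge\delta_2\ge\cdots\ge\delta_N$, and then read off $(m-1)\delta_m\le\sum_{k=1}^{m-1}\delta_k$ in one stroke. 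Your version makes the underlying structure (concavity of $k\mapsto 1/r_k$) explicit and avoids the inductive bookkeeping, at the cost of invoking all hypothesis inequalities up to index $m$ simultaneously rather than just the last one; the paper's induction is slightly more local but less transparent about why the implication holds. Minor remark: the observation $\delta_k\ge 0$ is true but not actually used in your final inequality, which only needs the ordering $\delta_1\ge\cdots\ge\delta_m$.
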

\begin{proof}
For $m=2$, the inequality gives $r_1 \leq 2r_2$, as $\frac{1}{r_0}$ is effectively zero by the aforementioned convention.
For the induction step, assume $(k-1)r_{k-1} \leq kr_k$ and consider $k+1$.
The inequality $\frac{1}{r_{k+1}} + \frac{1}{r_{k-1}} \leq \frac{2}{r_{k}}$ and the induction hypothesis together yield $\frac{1}{r_{k+1}} + \frac{k-1}{kr_{k}} \leq \frac{1}{r_{k+1}} + \frac{1}{r_{k-1}} \leq
\frac{2}{r_{k}}$.
Comparing the left and right sides, we obtain $kr_k \leq (k+1)r_{k+1}$, and the corollary follows.
\end{proof}

\begin{remark}
The inequalities in the conditions in this section do not involve $\Bd$.
Except from the non-degeneracy assumption in Theorem \ref{theo:necessaryequal}, all results are valid completely independent of the demand values.
This observation can be confirmed from \eqref{eq:lp}: Given a feasible schedule in form of a (non-degenerate) basic solution, whether or not it is optimal depends only on the left-hand side, which does not contain $\Bd$.
\end{remark}

\section{Complexity of Scheduling with Cardinality-Based Rates}
\label{sec:cardinality}

From Section \ref{sec:condition}, one can observe that the optimality conditions for $\CH^1$ and $\CH^N$ are more structured and stronger for $(N, \Bd, \Br)$.
This raises the question whether or not reaching optimality of $(N, \Bd, \Br)$ is more tractable than the general case.
In this section, we provide a positive answer to the question.

Consider first a more restrictive case, where the demand values are uniform.
For this setting, we provide an analytic solution requiring only linear time to compute, and prove it is globally optimal.

\begin{theorem}
\label{theo:polynomialequal}
For $(N, \Bd, \Br)$, let $m^* = \text{argmax}_{m=1}^N mr_m$.
If all demand values are uniform and equal to $d$, then the $N$ groups, $\{1, 2, \dots, m^*\}$, $\{2, 3, \dots, m^*+1\}$, $\dots$ $\{N, 1, \dots, m^*-1\}$, each scheduled for a time duration of $\frac{d}{m^*r_{m^*}}$, is optimal.
\end{theorem}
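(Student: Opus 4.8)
The plan is to show two things: that the proposed schedule is feasible (serves all demand $d$ on every link), and that its total length matches the optimal LP value, which I will lower-bound via duality. Feasibility is the easy part. The $N$ groups listed are the cyclic shifts of the block $\{1,\dots,m^*\}$ of size $m^*$; each link $i$ appears in exactly $m^*$ of these groups, and in every such group its rate is $r_{m^*}$ by the cardinality-based assumption. Since each group runs for time $\frac{d}{m^*r_{m^*}}$, link $i$ receives $m^* \cdot r_{m^*} \cdot \frac{d}{m^*r_{m^*}} = d$, exactly meeting its demand. The total schedule length is $N \cdot \frac{d}{m^*r_{m^*}} = \frac{Nd}{m^*r_{m^*}}$.

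For optimality I would exhibit a dual-feasible solution to \eqref{eq:dual} with the same objective value, or equivalently give a matching lower bound directly. Set $\pi_i = \frac{1}{m^*r_{m^*}}$ for all $i$. Dual feasibility requires $\sum_{i\in\Cc} r_{i\Cc}\pi_i \le 1$ for every group $\Cc \in \CH$: if $|\Cc| = m$, the left side equals $m r_m \cdot \frac{1}{m^*r_{m^*}}$, which is at most $1$ precisely because $m^* = \mathrm{argmax}_{m} mr_m$, so $m r_m \le m^* r_{m^*}$ for all $m$. Hence $\Bpi$ is dual feasible with objective $\sum_i d_i\pi_i = \frac{Nd}{m^*r_{m^*}}$. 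By weak LP duality this lower-bounds the optimal schedule length, and it coincides with the length of the proposed primal schedule, so the schedule is optimal.

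The remaining point to verify is purely bookkeeping: that the proposed set of groups really does consist of $N$ distinct cyclic shifts and that each link lies in exactly $m^*$ of them — a standard counting fact about the $N$ length-$m^*$ arcs on a cycle of $N$ vertices (each vertex is covered by exactly $m^*$ arcs). One should also note the degenerate endpoints: if $m^* = N$ the $N$ shifts all coincide with $\CN$, which is fine (the schedule is just $\CN$ run for time $\frac{d}{Nr_N}$, repeated-group multiplicities aggregating), and if $m^* = 1$ it reduces to plain TDMA. I do not anticipate a genuine obstacle here; the only mild subtlety is making the feasibility-plus-duality argument airtight when $\CH^*$ has repeated groups, but since we are only asserting \emph{a} schedule of the stated length exists and is optimal, collapsing duplicates does not affect the argument.
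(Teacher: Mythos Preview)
Your proof is correct. The paper's own argument is essentially the same idea stated without invoking LP duality: it observes that for any feasible schedule of length $T$, the average throughput $\frac{\sum_i d_i}{T}$ is bounded above by the maximum instantaneous sum-rate $m^* r_{m^*}$, so $T \ge \frac{Nd}{m^* r_{m^*}}$, and the proposed schedule attains this bound. Your dual certificate $\pi_i = \frac{1}{m^* r_{m^*}}$ encodes exactly this throughput bound---the dual constraint $\sum_{i\in\Cc} r_{i\Cc}\pi_i \le 1$ with uniform $\pi$ is precisely ``$m r_m \le m^* r_{m^*}$.'' So the two arguments are equivalent in content; the paper's phrasing is marginally more elementary (no appeal to weak duality), while yours ties the result cleanly into the LP framework already set up in Section~\ref{sec:linear} and makes the tightness of the bound explicit via a primal-dual pair. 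Your additional bookkeeping on the cyclic-shift coverage count and the degenerate cases $m^*=1$ and $m^*=N$ is more detailed than the paper, which simply asserts feasibility.
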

\begin{proof}
For all the links, the given schedule clearly meets demand $d$ exactly.
For any feasible scheduling solution (not restricted to the case in question) of length $T$, the total demand, $\sum_{i \in \CN} d_i$, divided by $T$, gives the average throughput per time unit.
As $\sum_{i \in \CN} d_i$ is a constant, a schedule is minimum in time if $(\sum_{i \in \CN} d_i)/T$ attains the maximum possible value.
By the assumption in the theorem, the instantaneous throughput of any feasible schedule can never exceed $m^*r_{m^*}$.
This throughput is achieved during the entire duration of the scheme in the theorem, and the result follows.
\end{proof}

\begin{remark}
Theorem \ref{theo:polynomialequal} generalizes a result in \cite{BoEp06}
concerning the much more restrictive case of $F_{\mathbb B}$, where $m^*$ corresponds to the size of the largest feasible matching.
In \cite{BoEp06}, however, all matchings of size $m^*$ are used for constructing the optimal schedule.
In our analysis, only $N$ groups are needed.
\end{remark}

For non-uniform demand, $(N, \Bd, \Br)$ does not admit an optimal schedule in closed form, yet we are able to conclude its polynomial-time tractability.
This fundamental insight is established in the following theorem.

\begin{theorem}
\label{theo:polynomial}
$(N, \Bd, \Br)$ is in class P, that is, the global optimum of any instance can be computed in polynomial time.
\end{theorem}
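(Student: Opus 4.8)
The plan is to exploit the cardinality-based structure to replace the exponential-size LP \eqref{eq:lp} by an LP of polynomial size, solve it, and then decode an explicit schedule in polynomial time. The observation driving this is that, when $F$ depends only on group size, a schedule is completely characterized, as far as feasibility and length are concerned, by two aggregate quantities: for each size $m$, the total time $x_m$ during which some group of size $m$ is active, and for each link $i$ and size $m$, the total time $y_{im}$ that link $i$ spends inside a group of size $m$. Note that for $(N,\Bd,\Br)$ the rates $r_1\ge\dots\ge r_N$ are given explicitly, so there is no oracle/column-generation issue — the whole difficulty is purely combinatorial, and this is exactly what the compression removes.

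Concretely, I would introduce variables $x_m\ge 0$ and $y_{im}\ge 0$, $i,m=1,\dots,N$, and consider the linear program
\begin{align*}
\min \quad & \sum_{m=1}^{N} x_m \\
\text{s.~t.} \quad & \sum_{i=1}^{N} y_{im} = m\, x_m, \quad m = 1,\dots,N, \\
& \sum_{m=1}^{N} r_m\, y_{im} \ge d_i, \quad i = 1,\dots,N, \\
& 0 \le y_{im} \le x_m, \quad i,m = 1,\dots,N.
\end{align*}
This has $N^2+N$ variables and $O(N^2)$ constraints, hence is solvable in time polynomial in $N$. It remains to show that its optimal value equals the minimum schedule length and that an optimal schedule can be recovered in polynomial time.

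One direction is immediate: given any feasible schedule, set $x_m=\sum_{\Cc:|\Cc|=m}T_\Cc$ and $y_{im}=\sum_{\Cc:i\in\Cc,\,|\Cc|=m}T_\Cc$. Since every size-$m$ group has exactly $m$ members, $\sum_i y_{im}=\sum_{\Cc:|\Cc|=m}|\Cc|T_\Cc=m x_m$; clearly $y_{im}\le x_m$; the demand served for link $i$ is $\sum_m r_m y_{im}\ge d_i$; and the objective $\sum_m x_m$ equals the schedule length. The converse — turning a feasible $(x_m,y_{im})$ back into a genuine schedule — is the step that needs care and is the main (and only mild) obstacle, since a priori an aggregate "size profile" need not be realizable by actual link sets. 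Here one invokes McNaughton's wrap-around rule for preemptive scheduling of $N$ jobs on $m$ identical parallel machines: for each fixed $m$ with $x_m>0$, view $y_{1m},\dots,y_{Nm}$ as job processing times; since $y_{im}\le x_m$ for every $i$ and $\sum_i y_{im}=m x_m$ equals the total machine capacity over a window of length $x_m$, the wrap-around construction produces, in $O(N)$ time, a schedule of makespan exactly $x_m$ in which no machine is ever idle and no job runs on two machines simultaneously. Concatenating these blocks over $m=1,\dots,N$ yields a schedule of total length $\sum_m x_m$ in which, at every instant, the set of active links is a group of the corresponding size, and link $i$ is served at rate $r_m$ for total time $y_{im}$, hence receives $\sum_m r_m y_{im}\ge d_i$ and empties its queue.

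Putting the two directions together, the compressed LP has the same optimum as the scheduling problem, and the decoding is polynomial; building the LP, solving it, and decoding therefore give a polynomial-time algorithm, so $(N,\Bd,\Br)$ is in class P. As a sanity check, the uniform-demand case of Theorem \ref{theo:polynomialequal} is recovered by the feasible point $y_{i,m^*}=d/r_{m^*}$, $x_{m^*}=Nd/(m^*r_{m^*})$ (all other variables zero), which concentrates all activity at the size $m^*$ maximizing $m r_m$.
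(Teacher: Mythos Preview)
Your argument is correct and takes a genuinely different route from the paper's. The paper attacks the \emph{dual} \eqref{eq:dual}: for $(N,\Bd,\Br)$ the dual constraints are invariant under permutations of the $\pi_i$, so (with $\Bd$ sorted ascending) there is a dual optimum with $\pi_1\le\dots\le\pi_N$; under that ordering, among all size-$m$ constraints only $r_m\sum_{i=N+1-m}^N\pi_i\le 1$ can bind, and the dual collapses to an LP with $O(N)$ variables and $O(N)$ constraints. Your approach stays on the primal side, aggregates by group size, and certifies via McNaughton's wrap-around that every feasible $(x_m,y_{im})$ is realizable by an actual schedule. The trade-off is clear: the paper's reduction is tighter ($O(N)$ versus your $O(N^2)$) and slicker, but it leans on duality and on the assumed ordering of $\Bd$, and recovering the schedule still requires a complementary-slackness step; your argument is self-contained, does not use the ordering of $\Bd$, and produces an explicit schedule directly --- at the cost that the decoded schedule may use $O(N^2)$ groups rather than~$N$. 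Both establish the theorem.
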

\begin{proof}
Consider the LP dual given in \eqref{eq:dual}.
For problem class $(N, \Bd, \Br)$, the dual has the following form.
\begin{subequations}
\label{eq:dualc}
\begin{align}
\max~~ & \sum_{i \in \CN} d_i \pi_i \label{eq:dualcobj}\\
\text{s.~t.}~~ & r_{|\Cc|} \sum_{i \in \Cc} \pi_i \leq 1~~\Cc \in \CH \label{eq:dualccons} \\
& \Bpi \geq 0
\end{align}
\end{subequations}

Observe that there is a symmetry among the occurrences of the dual variables in \eqref{eq:dualccons}.
As a result, given any feasible solution, swapping the values of any two dual variables will preserve feasibility.
The demand vector $\Bd$ is given in ascending order.
It follows that there must exist an optimal solution with $\pi_1 \leq \pi_2 \leq \dots \leq \pi_N$, because otherwise the objective function value can be improved or kept the same by swapping the variable values so that the condition holds.

Based on the above observation, one concludes that, among all constraints of \eqref{eq:dualccons} with $m$ variables on the left-hand side, the inequality $r_m \sum_{i={N+1-m}}^N \pi_i \leq 1$ is the most stringent  one in defining the optimum.
Therefore, the number of constraints required to define optimum can be reduced from $2^N-1$ to $N$, implying that, at optimum, the scheduling problem is equivalent to the following LP.

\begin{subequations}
\label{eq:dualcs}
\begin{align}
\max~~ & \sum_{i \in \CN} d_i \pi_i \label{eq:dualcsobj}\\
\text{s.~t.}~~ & \sum_{i=N+1-m}^N r_{m} \pi_i \leq 1, ~~ m=1, 2, \dots, N\label{eq:dualcscons}, \\
& 0 \leq \pi_1 \leq \pi_2 \leq \dots \leq \pi_N.
\end{align}
\end{subequations}

In conclusion, the optimal solution to problem class $(N, \Bd, \Br)$ is found by solving an LP of size $O(N)$, and the theorem follows.
\end{proof}

\vspace{1mm}
The above theorem is significant not only for the special symmetric case $(N, \Bd, \Br)$, but also for all scenarios where the transmitters having similar distances (and hence close-to-uniform channel gains) to their receivers, and the latter are located close to each other.
For these cases, one can expect that solving $(N, \Bd, \Br)$, which can be done fast, will provide a good
approximate solution to the global optimum.

\section{An Algorithmic Framework}
\label{sec:algorithm}

Since optimal scheduling is in general complex, it is important to propose algorithms which trade optimality against reduced complexity and yield decent performance.
To this end we propose several algorithm variations that range from suboptimal ones with low complexity to one actually optimal with high complexity.
They are all based on a common framework that uses a natural view of the problem and that is based to some degree on some of the optimality conditions and insights derived in the previous sections.
In fact we will demonstrate that the modular structure we propose eventually leads to exploiting tools from optimization theory so that we may come close to, or achieve full optimality, at a reduced complexity level.

As is evident from the LP formulation of the problem, any scheduling algorithm will have to have two basic components:\\
(i) a method for generating the $N$ link groups that will be part of the proposed final schedule, and\\
(ii) a method for deciding the duration of activation for each of these sets.\\
Later, we will confirm that this structural decomposition actually leads to a powerful toolset for eventual optimization.

Our proposed algorithms use a variety of criteria for fulfilling the two aforementioned requirements.
Each algorithm uses what we call a {\em Group Generation Module} to select the activation sets and an {\em Activation Duration Module} to decide the length of the activation of each set.
The two modules do not operate independently, but are closely coupled and operate interactively.
Before proceeding to the description and evaluation of these algorithms, we should emphasize that they are not based on ideas like those that govern the so-called approximation algorithms (e.g. \cite{Wan11}), or algorithms that impose structural restriction or additional assumptions on the problem.
Instead, they are completely generally applicable.
In fact we will show that some of these algorithms do achieve the optimal solution when some of the conditions that were mentioned earlier hold (i.e. in the cases where either $\CH^1$, or $\CH^N$ is optimal).

We proceed now with the description of the algorithms.
Regarding the Activation Duration module, we consider two possibilities. Either we activate the chosen group until one of its links empties its queue or we activate it for a fixed amount of time $\Delta$ chosen a priori as a parameter.
Clearly, in the latter case, it is possible that the time that the a queue empties is less that $\Delta$.
In that case the termination of the activation period occurs at that time instant, rather than continuing on until time $\Delta$ has passed.
Therefore for large values of $\Delta$ the two criteria become less and less distinguishable.
We refer to the first criterion as {\em TF} (for ``time at which the first queue of the group empties'') and to the second as {\em T$\Delta$} (for ``time $\Delta$, unless a queue empties earlier'').

Regarding the group generation module, some care needs to be exercised because here is the main source of high complexity.
Namely, there are $2^N-1$ possible groups. So to chose groups we must utilize some heuristic in the selection or, without any knowledge of the rate function $F$, we must endure the full consideration of all groups.
In either case, we want to reduce the complexity by avoiding the solution of the full LP in \eqref{eq:lp}.
Therefore we must choose a metric by which we will evaluate the candidate groups.
To this effect we either consider the sum-rate metric ({\em SR}), i.e. the quantity $\sum_{i \in \Cc} r_{i \Cc}$, or the weighted sum-rate ({\em WSR}), i.e. the quantity $\sum_{i \in \Cc} q_ir_{i \Cc}$, where $q_i$ is the ``current'' queue size at the transmitter of link $i$.
Clearly, at the start $q_i = d_i$; however, as different links get activated at different times, the initial $d_i$ keeps diminishing until it reaches zero.
Whether we choose the {\em SR} or the {\em WSR} metric, we have two choices for selecting a group.
Either we look at all $2^N-1$ groups (or all remaining groups, after some links have emptied) or we look at a judiciously chosen group that requires a much reduced search.
In the first case, we call the selection method {\em exact}, while in the second we call it {\em heuristic}.
In fact, we will later see that the ``{\em exact}'' choice can be achieved without necessarily looking at all $2^N-1$ possible groups.
Thus we have four possible group generation methods: (i) {\em SR--exact}, (ii) {\em SR--heuristic}, (iii) {\em WSR--exact} and (iv) {\em WSR--heuristic}. Since we may pair any one of these four group generation methods with either the {\em TF} criterion or the {\em T$\Delta$} criterion in the activation module we obtain a total of eight algorithms.

It remains to describe the heuristic method of choosing a group.
We propose the following.
We rank the $N$ links according to their currently remaining queue size $q_i$ in descending order.
To form a group we start with the singleton having the link at the top of the rank.
We then visit the second link of the ranking and pair it with the first one.
Doing so, the rates of concurrent activation of both links are reduced.
If the updated metric ({\em SR} or {\em WSR}) increases as a result of pairing the two links into a group, we keep the second link in the group.
Otherwise, we skip it.
We then visit the third link in the ranking and repeat the same process.
We proceed in this fashion until all links are visited.
Thus one group will emerge at the end of this process.
To hedge against this process being highly suboptimal, we repeat this entire construction for two additional rank-permutations, where we start with the second and third link in the ranking respectively.
Thus, in the end, we have three possible candidate groups from which we select the one with the highest metric.

Therefore we now have the following algorithms:
(1) {\em TF--SR--exact},
(2) {\em TF--SR--heuristic},
(3) {\em TF--WSR--exact},
(4) {\em TF--WSR--heuristic},
(5) {\em T$\Delta$--SR--exact},
(6) {\em T$\Delta$--SR--heuristic},
(7) {\em T$\Delta$--WSR--exact},
(8) {\em T$\Delta$--WSR--heuristic}.

\section{Generalizing the Framework with Optimization Tools}
\label{sec:general}

As we hinted earlier, the simple algorithms that were described in the preceding section can be embedded in a considerably more general setting that can exploit a variety of optimization techniques to yield a better combination of performance and complexity.
Specifically, we may now consider the Activation Duration module as a more sophisticated process.
Instead of choosing a simplistic criterion for activation (like the {\em TF} and the {\em T$\Delta$} ), it can actually obtain a ``tentative'' set of activation times that are actually optimal for a much reduced set of groups.
That is, it can be thought of as solving the LP over a small, limited, and restricted set of link groups.
Once it does this, it feeds back to the group generation module a ``metric'' that is based on the dual variables of the LP.
This metric is then used (in lieu of the {\em SR}, or the {\em WSR} metrics) to select a new group.
The new group is fed to the Activation Duration module that proceeds to resolve the LP and obtain a new tentative set of activation times.
It is possible that this new set improves on the previous one (that is, it does yield a shorter schedule length or it does not).
If it does, the process is repeated until no more improvement is achieved and the final activation times are then the ones that are obtained by the last iteration of the process.
This is the so-called Column-Generation method.

The remaining issue is how to select the new group to be added to the partial LP solution at each step of the iteration.
The dual variable values scaled by the rates are used as the metric in the group generation module.
These are two possibilities:
Either a heuristic can be used (like the one proposed for the four of the eight algorithms of the previous section, with the only difference that now the the dual variable $\pi_i$ is used to sort the links), or an exact determination of the next ``optimal'' group based on the dual variable metric.
If the latter option is chosen, then there are two further possibilities: either an exhaustive search performed over all remaining groups or a more efficient determination.
For the use of the second possibility one needs to know the rate function $F$ introduced in Section III.
Then, a variety of optimal (yet efficient) searches, from techniques of convex optimization, to branch-and-cut or branch and bound and other methods \cite{Bertsimas97, Bertsekas99, Chen10}, can be performed to determine the best group capitalizing on the knowledge and properties of the function.
If on the other hand only the rate values are known (but not the rate function that produced them) then the only option for exact determination of the best group at each step is the exhaustive search.

Thus, in the arsenal of the eight previously described algorithms we add two more.
The first uses the Column Generation method along with the rank-based heuristic for ``next group'' selection at each step (we call this the {\em CG--heuristic} algorithm).
The second uses the column generation method as well, but with an ``exact'' selection of the next group at each step.
We use the exhaustive search method for that while we note the possibility of dramatic expansion of the problem sizes (in terms of number of links $N$) that are possible to solve if we utilize the knowledge of the rate function (we call that the {\em CG--exact} algorithm).

Note also that even in the earlier eight algorithms the ``exact'' group generation option can be exercised with significantly reduced complexity if the rate function is known.
The only difference there is that the metric for the candidate groups is not the one that depends on the dual variables, but rather the {\em SR} or {\em WSR}, or possibly even a totally different metric.
Of course the use of other metrics, either in association with the LP or the {\em TF}, or {\em T$\Delta$} methods does not in general lead to the minimum length schedule, while in the case of the {\em CG--exact} algorithm it generally does.
The complexity of the {\em CG--exact} algorithm though is in the worst case exponential.

\section{Optimality Properties of the Algorithms}
\label{sec:algoritm-opt}

We present now some properties of the proposed algorithms that strengthen their appropriateness for the solution to the scheduling problem.
% For the example below,
%however, this would be sub-optimal.
%
%
%\subsection{An Overview of the Framework}
%
%Our algorithmic framework is composed by two principle modules
%following a decomposition approach. The first module performs
%processing of the demand, that is, it drains the queues partially or
%completely by activating one or multiple groups. After performing a
%step of draining the queues, the second module is invoked, with a
%specification of link preference (to be detailed later) set by the
%first module. The second module has the task of generating and
%returning the group that is optimal for the given preference
%specification. This module is not concerned with how the preference is
%set, nor which groups have been generated and activated earlier. In
%the sequel, we refer to the two modules as $\TP$ and $\TG$,
%respectively.
In the following, we denote an algorithm through the descriptive terms used in the preceding sections.
For example, $\langle T\Delta, SR \rangle$ denotes the strategy of selecting the group with maximum sum-rate in every iteration, and the time duration of activation is a constant $\Delta > 0$ (or until the first queue in the group empties, whichever occurs first).
It is clear that each of the five designs of the Activation Duration module will empty all queues in a finite number of iterations.
Here, one iteration of the module refers to the activation of one group under {\em TF} or {\em T${\Delta}$}, or solving a whole LP under {\em CG}.
Obviously, the running time of one iteration is polynomial in $N$ in all cases.
As for the number of iterations to empty all queues, the complexity is summarized in the following theorem.

\begin{theorem}
\label{theo:complexityiteration}
1) The number of iterations under {\em CG--exact} is polynomial in $N$,
2) the number of iterations under $\langle TF, SR \rangle$ and $\langle TF, WSR \rangle$ is $O(N)$,
and
3) the number of iterations by $\langle T\Delta, SR\rangle$ and $\langle T\Delta, WSR \rangle$ is $O(N \frac{\max_{i \in \CN}d_i }{\Delta r^*})$ and hence pseudo-polynomial in $N$, where $r^*$ is either
a) $\min_{i \in \CN}{r_{i\CN}}$, for the case of a continuous rate function $F_\Cc$, or
b) the lowest positive rate level of a discrete rate function $F$.
\end{theorem}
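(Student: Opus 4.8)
The plan is to bound the three iteration counts by three separate accounting arguments, the first two of which are short ``how much progress per iteration'' estimates, with the real work reserved for item~1).

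For item~2), under the {\em TF} rule each iteration runs the selected group until at least one of its links empties. The group returned by the generation module contains only links with strictly positive remaining demand, and (see below) these are served at strictly positive rates, so the iteration terminates in finite time with at least one previously non-empty queue now empty. Since there are $N$ queues, at most $N$ iterations can occur, so the count is $O(N)$ --- uniformly over the {\em SR}/{\em WSR} metrics and the exact/heuristic group-generation variants.

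For item~3) I would split the iterations of a {\em T$\Delta$} run into \emph{boundary} iterations, which end before time $\Delta$ because some queue empties, and \emph{full} iterations, which run for exactly $\Delta$ with no queue emptying. As in item~2), there are at most $N$ boundary iterations. In a full iteration running group $\Cc$, every link $i\in\Cc$ receives the full amount $\Delta\,r_{i\Cc}$ of its demand, so the total demand removed is $\Delta\sum_{i\in\Cc} r_{i\Cc}$. The point to establish is that this group sum-rate is at least $r^*>0$: the selected group has strictly positive metric, hence carries at least one link at a positive rate, and every positive rate is at least $r^*$ --- for a continuous $F_{\mathbb C}$ because $r_{i\Cc}\ge r_{i\CN}\ge\min_{j}r_{j\CN}=r^*$ by rate monotonicity, and for a discrete $F$ because a positive rate cannot lie below the smallest positive rate level. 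Since the total demand is $\sum_i d_i\le N\max_i d_i$ and each full iteration removes at least $\Delta r^*$ of it, there are at most $N\max_i d_i/(\Delta r^*)$ full iterations; adding the two bounds gives a total of at most $N+N\max_i d_i/(\Delta r^*)$, which is the stated $O\!\big(N\max_i d_i/(\Delta r^*)\big)$ (the additive $N$ being immaterial here) and is pseudo-polynomial because it scales with the magnitude of the demands rather than their encoding length.

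Item~1) is the obstacle. Here the Activation Duration module solves, at each iteration, the restricted version of LP~\eqref{eq:lp} over the groups generated so far and exposes its optimal dual vector $\Bpi$; exact group generation then returns a group of minimum reduced cost, i.e.\ it solves $\max_{\Cc\in\CH}\sum_{i\in\Cc}r_{i\Cc}\pi_i$, and the procedure halts once this maximum is $\le 1$. This step is exactly a separation oracle for the dual polytope~\eqref{eq:dual}: given $\Bpi$ it either certifies dual feasibility or produces a violated inequality. I would therefore derive the iteration bound from the polynomial equivalence of separation and optimization for linear programs --- with this oracle the LP~\eqref{eq:lp}/\eqref{eq:dual}, and hence (via Lemmas~\ref{theo:equality} and~\ref{theo:ngroups}) the scheduling problem itself, can be solved using a number of oracle calls that is polynomial in $N$ and in the bit-length of $(\Bd,F)$, and each such call is one {\em CG--exact} iteration. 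The subtlety is that this genuinely requires the cutting-plane/ellipsoid viewpoint: a bare simplex-style column generation has no polynomial bound on its number of pivots, so the argument must route the bound through the separation--optimization machinery. One should also note that only the \emph{number} of iterations is polynomial --- since the pricing oracle is realized by exhaustive enumeration of the remaining groups, the per-iteration cost, and thus the overall running time of {\em CG--exact}, remains exponential in the worst case, consistent with the remark made later.
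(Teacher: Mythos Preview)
Your proposal is correct and follows essentially the same approach as the paper: item~2) via one queue emptying per {\em TF} iteration, item~3) via the total-demand bound $\sum_i d_i\le N\max_i d_i$ against a per-iteration drain of at least $\Delta r^*$, and item~1) via the polynomial equivalence of separation and optimization for the dual LP. Your treatment is in fact more careful than the paper's terse argument --- the boundary/full split in item~3) and the explicit observation that item~1) genuinely requires the ellipsoid/cutting-plane viewpoint rather than bare simplex-style column generation are refinements the paper leaves implicit.
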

%\textcolor{red}{update:$O(N \frac{\max_{i \in \CN}d_i }{\Delta*r})$.}

\begin{proof}
An LP can be solved to optimality using a polynomial number of iterations, or equivalently, separation of constraints in the dual LP, therefore the first statement holds.
The second statement follows from the fact that at least one link gets its
queue emptied in every iteration in $\langle TF, SR \rangle$ and $\langle TF, SR \rangle$.
The last statement follows from $\sum_{i \in \CN} d_i \leq N \max_{i
\in \CN} d_i$, and that $\langle T{\Delta}, SR\rangle$ and $\langle T\Delta, WSR \rangle$ drain at least an amount proportional to $\Delta r^*$ from one or several queues per iteration.
\end{proof}

Among the above designs {\em CG--exact} is an exact algorithm that guarantees global optimality \cite{BjVaYu04}).
The correspondence to the general method of Column Generation consists of the fact that a column in the LP \eqref{eq:lp} corresponds a variable associated with a group.
Note that the complexity of {\em CG--exact} by Theorem \ref{theo:complexityiteration} does not contradict the general ${\cal NP}$-hardness of the scheduling problem.

Let us consider the other four options of section \ref{sec:algorithm}.
In general, they are sub-optimal, but significantly simpler than {\em CG}.
In the following, we show that, if the corresponding sufficient optimality conditions discussed in Section
\ref{sec:condition} strongly hold (i.e., the inequalities in the conditions are strict), the use of {\em SR} gives the optimal schedule, i.e., the two base scheduling solutions $\CH^1$ and $\CH^N$, respectively.

\begin{theorem}
\label{theo:performancerate}
Both $\langle TF, SR \rangle$ and $\langle T{\Delta}, SR\rangle$  with any $\Delta > 0$ lead to schedules $\CH^1$ and $\CH^N$, respectively if, under exact group selection, Conditions \ref{cond:sep} and \ref{cond:sufficientall} strictly hold.
\end{theorem}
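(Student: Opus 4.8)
The plan is to analyze the two cases separately, showing that the greedy sum-rate group generation, when iterated, reconstructs exactly the base schedules $\CH^1$ and $\CH^N$ whenever the corresponding sufficient conditions hold with strict inequality. The key observation is that under exact group selection, each iteration picks the group maximizing $\sum_{i \in \Cc} r_{i\Cc}$ among all currently available groups (restricted to links with positive remaining queue), and the Activation Duration module — whether {\em TF} or {\em T$\Delta$} — only affects how quickly queues drain, not which groups are selected, as long as the selected group stays the same until a queue empties.

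For the $\CH^1$ case: First I would show that when Condition \ref{cond:sep} holds strictly, i.e. $\sum_{i \in \Cc} \frac{r_{i\Cc}}{r_{ii}} < 1$ for every $\Cc$ with $|\Cc| \ge 2$, the singleton group consisting of the link with the largest rate $r_{ii}$ strictly dominates in sum-rate any multi-link group. Indeed, for any group $\Cc$ with $|\Cc|\ge 2$, strictly we have $\sum_{i\in\Cc} r_{i\Cc} < \sum_{i\in\Cc} r_{ii}\cdot\frac{r_{i\Cc}}{r_{ii}}$... more carefully: $\sum_{i\in\Cc} r_{i\Cc} = \sum_{i\in\Cc} r_{ii}\frac{r_{i\Cc}}{r_{ii}} \le (\max_i r_{ii})\sum_{i\in\Cc}\frac{r_{i\Cc}}{r_{ii}} < \max_i r_{ii}$. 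Hence the maximum-sum-rate group is always a singleton, and the greedy procedure activates single links until each empties — which is precisely $\CH^1$. This part should be routine once the inequality is set up; I would also invoke Theorem \ref{theo:sep} to note $\CH^1$ is indeed optimal, so the algorithm returns an optimal schedule.

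For the $\CH^N$ case: The argument is more delicate. I would argue inductively on the number of remaining links. At the first iteration all $N$ links are present; I must show the full group $\CN$ strictly maximizes sum-rate. The natural route is to use Theorem \ref{theo:sufficientall}: Condition \ref{cond:sufficientall} holding strictly implies $\CH^N$ is (uniquely) optimal, and more specifically the chain of inequalities $\frac{1}{r^{min}_m} + \frac{1}{r^{min}_{m-2}} < \frac{2}{r^{max}_{m-1}}$ should be leveraged to show that group sum-rates are strictly increasing in size in a sense strong enough that the largest available group always wins. A clean way: show by telescoping the strict inequalities that $m\, r^{min}_m > (m-1)\, r^{max}_{m-1}$ for all $m$, which gives that the sum-rate of any size-$m$ group (at least $m\,r^{min}_m$) strictly exceeds the sum-rate of any size-$(m-1)$ group (at most $(m-1)r^{max}_{m-1}$); then the exact group selection always picks the group containing all currently-backlogged links. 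This reconstructs $\CH^N$ exactly, since after the first queue empties the remaining $N-1$ links form the next group, and so on. The {\em TF}/{\em T$\Delta$} distinction is immaterial because the same maximal group is re-selected every iteration until its smallest-queue link empties.

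The main obstacle I anticipate is the $\CH^N$ direction: deriving, from the pairwise-looking inequalities of Condition \ref{cond:sufficientall} (which mix $r^{min}$ and $r^{max}$ across three consecutive sizes), the clean monotonicity $m\,r^{min}_m > (m-1)\,r^{max}_{m-1}$ needed to guarantee the greedy picks the full group. This telescoping is plausible — the Corollary following Theorem \ref{theo:necessaryequal} does essentially this in the symmetric case — but in the general $(N,\Bd,F)$ setting one must be careful that the $r^{max}/r^{min}$ bookkeeping does not break the induction, and one may need to additionally invoke rate monotonicity to relate $r^{min}_m$ and $r^{max}_{m-1}$ correctly. Establishing that step rigorously is where the real work lies; everything else follows from the LP optimality facts already proved in Theorems \ref{theo:sep} and \ref{theo:sufficientall}.
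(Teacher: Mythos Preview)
Your proposal is correct and follows essentially the same route as the paper's proof. For the $\CH^1$ direction you use exactly the paper's inequality $\sum_{i\in\Cc} r_{i\Cc} \le (\max_i r_{ii})\sum_{i\in\Cc} r_{i\Cc}/r_{ii} < \max_i r_{ii}$, and for the $\CH^N$ direction the paper also proves the key monotonicity $m\,r^{min}_m > (m-1)\,r^{max}_{m-1}$ by induction from the strict form of Condition~\ref{cond:sufficientall}, precisely the telescoping step you single out as the crux.
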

\begin{proof}
Assume Condition \ref{cond:sep} is strictly satisfied and, without loss of generality, that the rates are in descending order for individual links, that is, $r_{11} \geq r_{22} \geq \dots \geq r_{NN}$.
For any group $\Cc$ with $|\Cc| \geq 2$, we have

$$
\sum_{i \in \Cc} \frac{r_{i\Cc}}{r_{ii}} < 1 \Leftrightarrow
\sum_{i \in \Cc} r_{i\Cc} \frac{r_{11}}{r_{ii}} < r_{11}
\Rightarrow \sum_{i \in \Cc} r_{i \Cc} < r_{11}.
$$

The last inequality above follows from $\frac{r_{11}}{r_{ii}} \geq 1, i \in \Cc$.
As a result, $\{1\}$ is the group to be activated.
For {\em TF}, the demand of link one is served for time duration $d_1/r_{11}$, and the next group to be  activated with the {\em SR} metric is $\{2\}$, and so on.
Applying {\em T$\Delta$}, the demand of link one is gradually served by repeatedly activating $\{1\}$, after which $\{2\}$ is used.
Hence both strategies lead to schedule $\CH^1$.

If we assume Condition 4 strongly holds, we first prove that $(m-1)r_{m-1}^{max} < m r_{m}^{min}, m=2, \dots, N$.  For $m=2$, $r_1^{max} < 2 r_{2}^{min}$ follows immediately from the assumption.
Suppose $(k-1)r_{k-1}^{max} < k r_{k}^{min}$ holds.
Then, Condition 4 written for $k+1$ leads to the following

$$
\frac{1}{r_{k+1}^{min}} +
\frac{k-1}{k r_{k}^{max}}
< \frac{2}{r_{k}^{max}}
$$

which immediately implies $k r_{k}^{max} < (k+1) r_{k+1}^{min}$.
As a result, the total sum-rate increases for any group by adding new links.
Therefore, the grand group $\CN$ has the highest sum-rate under {\em SR}. For both {\em TF} and {\em T$\Delta$}, the group will be activated until one of the links' queue becomes empty.
Repeating the above argument, the next group to be activated is the one consisting of all links with positive remaining demands, and the theorem follows.
\end{proof}

The construction of the proof leads to another observation of $\langle
TF, SR \rangle$ and $\langle T{\Delta}, SR\rangle$. Under the exact,
or a deterministic heuristic for group selection with the {\em SR}
metric, {\em T$\Delta$ } will be activating the same group until one
of its links' queue empties, becoming equivalent to {\em TF}.

Theorem \ref{theo:performancerate} supports the choice of {\em SR} as the group selection metric.
Below, we illustrate the merit of the {\em WSR} and that of {\em T${\Delta}$} using the following two examples.

\begin{example}
\label{ex:nothighest}
Consider a case where $N=3$, $ d_1=d,~ d_2=2d,~d_3=3d$, and $r_1 = 6,~r_2 = 5, r_3 = 4,$, i.e. $(3, (d, 2d, 3d), (6, 5, 4))$.
The unique optimum comprises the two groups $\{1, 3\}$ and $\{2, 3\}$, with time durations $\frac{d}{5}$ and $\frac{2d}{5}$, respectively.
\end{example}

Note that for this example $3r_3 > 2r_2 > r_1$, yet group $\{1,2,3\}$ having the highest sum-rate is not part of the optimum.
Hence preferring the top sum-rate group (which itself is not a trivial task) in designing a schedule may not work well.

%%fix this%%
%In Example \ref{ex:nothighest}, the time durations are such that some link gets its entire demand served in each group. In general, one intuitive algorithmic notion is to iteratively construct $N$ groups, and, for each group, apply a time duration (which is trivial to compute) such that the remaining queue of (at least) one link is emptied.

\begin{example}
\label{ex:share}
Consider $(3, (d, d, d), \Br)$, with $2r_2 > 3r_3$ and $2r_2 > r_1$.
One can easily verify that the unique optimum schedule consists
of groups $\{1,2\}$, $\{2,3\}$, and $\{1,3\}$, each with a time duration
of $\frac{d}{2r_2}$.
\end{example}

Observe that, for Example \ref{ex:share}, {\em none} of the links has
its entire queue emptied in any of the groups that it participates in at the
unique optimum. Thus the {\em TF}--based algorithm may fail, {\em even if an exhaustive search}
of all ordered combinations of $N$ groups is tried.

\begin{remark}
For Example 1, using the {\em WSR} metric enables the construction of an optimal schedule.
Specifically, $\langle TF, WSR \rangle$  yields an optimum, and and $\langle T{\Delta}, WSR\rangle$ is also optimal for $\Delta>\frac{4}{15}d$.
For Example 2, one can verify that $\langle T{\Delta}, SR\rangle$ delivers optimum for
all $\Delta =\frac{d}{2^kr_2}$, ($k$ is a positive integer),
and $\langle T{\Delta}, WSR\rangle$ is optimal for $\Delta = \frac{d}{2^ur_2}$,
where $u$ is positive integer that satisfies $u>\log_2 \frac{r_2}{2r_2-r_1}$ .
\end{remark}

\section{Simulation Setup and Results}
\label{sec:simulation}

In this section we provide simulation results to illustrate the performance of the algorithms developed within the framework.
We consider a set of $N = 15$ links randomly placed in an area of 1000$\times$1000 meters.
The signal propagation follows a distance-based model with a path loss exponent of 4.
The distance between the transmitter and the receiver of a link was restricted to be between 3 and 250 meters to obtain links of practically meaningful SNR values.
For the queue sizes, we defined two different sets:
(i) uniform demand of 1000 bits,
(ii) non-uniform demand, uniformly distributed in [100 \dots 1500] bits.
In each setup, 100 link location instances were ran, unless otherwise indicated.

For the rate values, we consider two cases, namely
(i) rates given by the Shannon formula as in Eq. (\ref{eq:shannon}),
(ii) rates given by a combination of uncoded BPSK with symbol rate control, at a fixed error rate, as in \cite{PaEp08b}.
In the Appendix we provide a detailed derivation of the BPSK rate formula used, with the
standard assumption that the energy of interference from concurrent transmissions is equivalent to  Average White Gaussian Noise (AWGN).
For illustration purposes we provide both functions in Figure \ref{fig:Shannon-BSPK} of the Appendix.
Note that although both are approximate, with Shannon's formula giving an upper bound on the achievable rates, and BPSK giving a more practical flavor in our investigation, they jointly provide a useful insight on how the physical layer affects the algorithms' performance.

For each of the setups described above, we solved the full LP of
(\ref{eq:lp}), using AMPL \cite{AMPL02}, to establish the optimal
schedule length, assuming a unit of bandwidth in Hz and error rate $z
= 10^{-6}$ for the BPSK rate calculations.  Then, all ten algorithms
of the previous sections were ran for all instances.  All results
presented below are normalized with respect to the ''baseline''
optimal value.

In Figure \ref{fig:TS-analysis} we present the effect of the $\Delta$
parameter on the performance of the {\em T$\Delta$}-based algorithms.
Intuitively, a small $\Delta$ enables a more ``cautious'' design since
the algorithms will iterate more times between the two modules, thus
having more opportunities to select groups closer to the optimal with
more refinement. This is directly confirmed by the green ({\em
T$\Delta$--WSR--Heuristic}) and black ({\em T$\Delta$--WSR--Exact})
algorithms, for which the gap increases when $\Delta$ grows.  Note
that for a very large $\Delta$ value, the {\em T$\Delta$} strategy
would coincide with the {\em TF} one, as can be seen in the rightmost
of Figure \ref{fig:TS-analysis}.
In contrast to {\em T$\Delta$--WSR--Heuristic} and {\em
T$\Delta$--WSR--Exact}, the red line for algorithm {\em
T$\Delta$--SR--Heuristic} has the opposite trend, as small $\Delta$
gives larger optimality gap. This is due to the behavior of the
heuristic -- the heuristic sorts the links by their remaining demand
in group generation, even though the sorting may not respond well to
the {\em SR} metric. The mismatch is however rectified for larger
$\Delta$ values since the impact of the heuristic will be accumulated
fewer times. The blue line for algorithm {\em
T$\Delta$--SR--Exact} is horizontal, because, with exact group
selection and the {\em SR} metric, the maximum sum-rate group will
remain selected, regardless of the size of $\Delta$, until one of the
link queues empties. Thus T$\Delta$ is equivalent to {\em TF}, as
commented in the discussion after Theorem
\ref{theo:performancerate} in Section
\ref{sec:algoritm-opt}.

For {\em T$\Delta$--WSR--Heuristic} and {\em T$\Delta$--WSR--Exact},
the gain of reducing $\Delta$ diminishes below some point
($\sim$0.5s in Figure \ref{fig:TS-analysis}), since, without
significant queue draining, the same group is simply selected over and
over. For the minimum $\Delta$ used (i.e., leftmost of the figure),
the best performance is given by {\em T$\Delta$--WSR--Exact}, as also
suggested by the first example in Section \ref{sec:algoritm-opt}.  On
the rightmost (i.e., {\em TF} activation), the {\em WSR} metric is
inferior to the {\em SR} metric, because the former may result in a
group with low sum-rate, and the impact can not be mitigated by group
activation which runs the group until one link empties the entire
remaining queue. As a result, for large $\Delta$ or equivalently {\em
TF} activation, the best performance is achieved by {\em
T$\Delta$--SR--Exact}. Finally, the performance of heuristic selection
is consistently inferior to exact selection in Figure \ref{fig:TS-analysis}.

\begin{figure*}[tbp]
    \centering
    \subfigure[Equal demands]{
                \includegraphics[width=0.48\textwidth]{./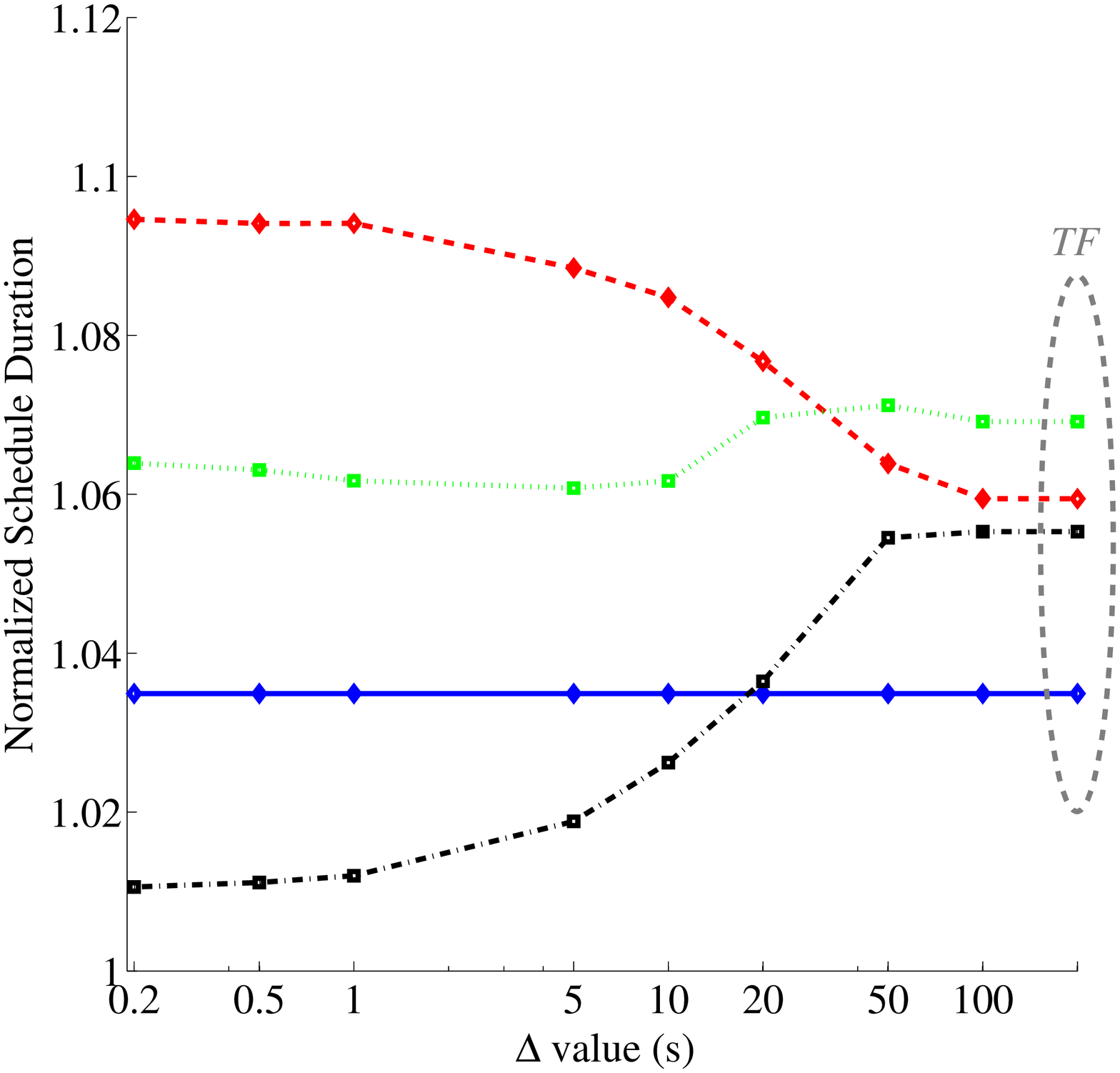}
                \label{fig:TS-analysis-ED}
                }
    \subfigure[Random demands]{
                \includegraphics[width=0.48\textwidth]{./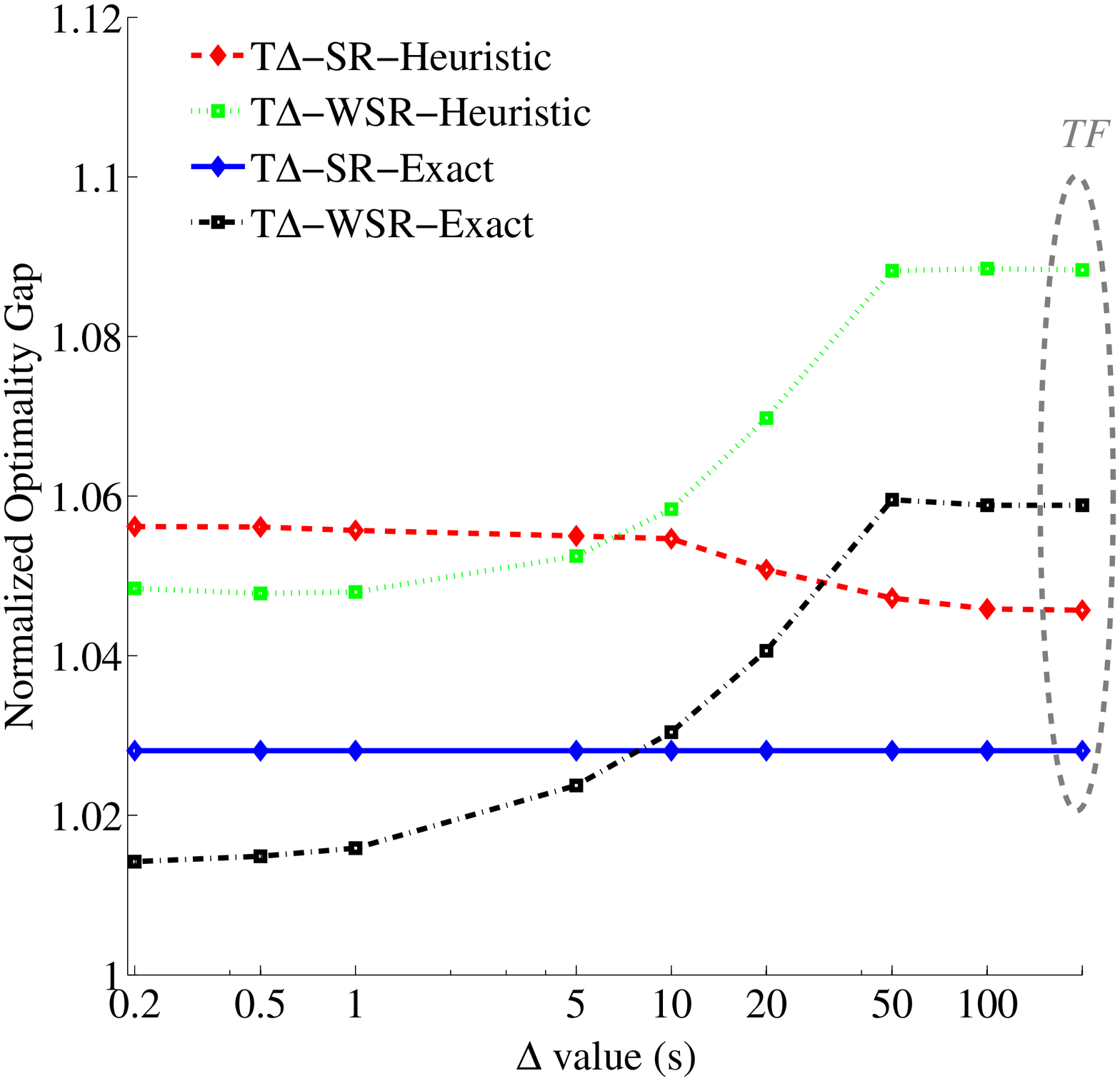}
                \label{fig:TS-analysis-RD}
                }
    \caption{Effect of activation time $\Delta$ on the schedule duration for the four {\em T$\Delta$}--based algorithms, under the Shannon rate function (x-axis in logarithmic scale; each point is an average value over 50 simulation instances).}
\label{fig:TS-analysis}
\end{figure*}

\begin{figure}[tbp]
\vspace{-5mm}
    \centering
    \subfigure[Equal demands]{
                \includegraphics[width=0.48\textwidth]{./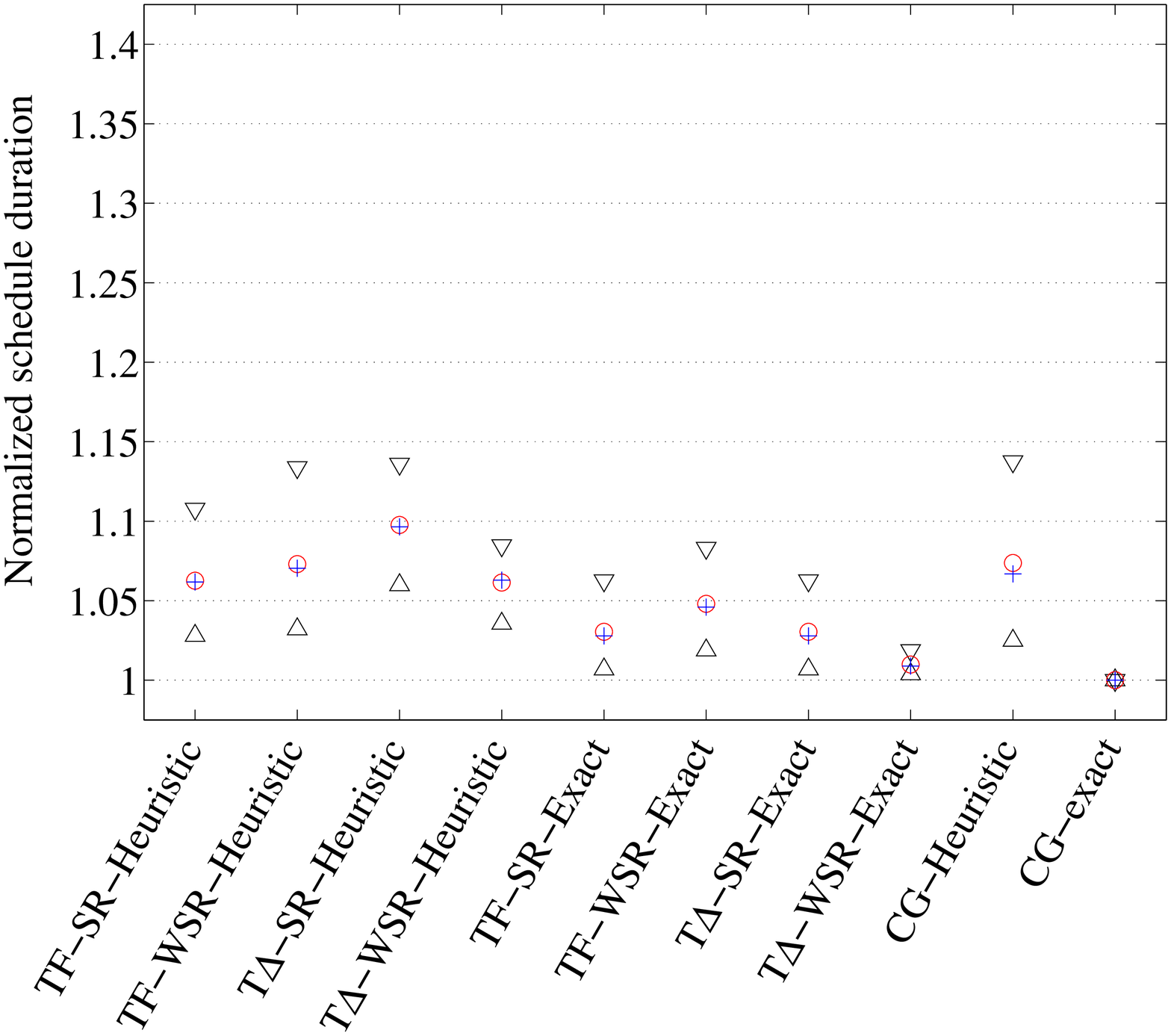}
                \label{fig:Shannon-U}
                }
    \subfigure[Random demands]{
                \includegraphics[width=0.48\textwidth]{./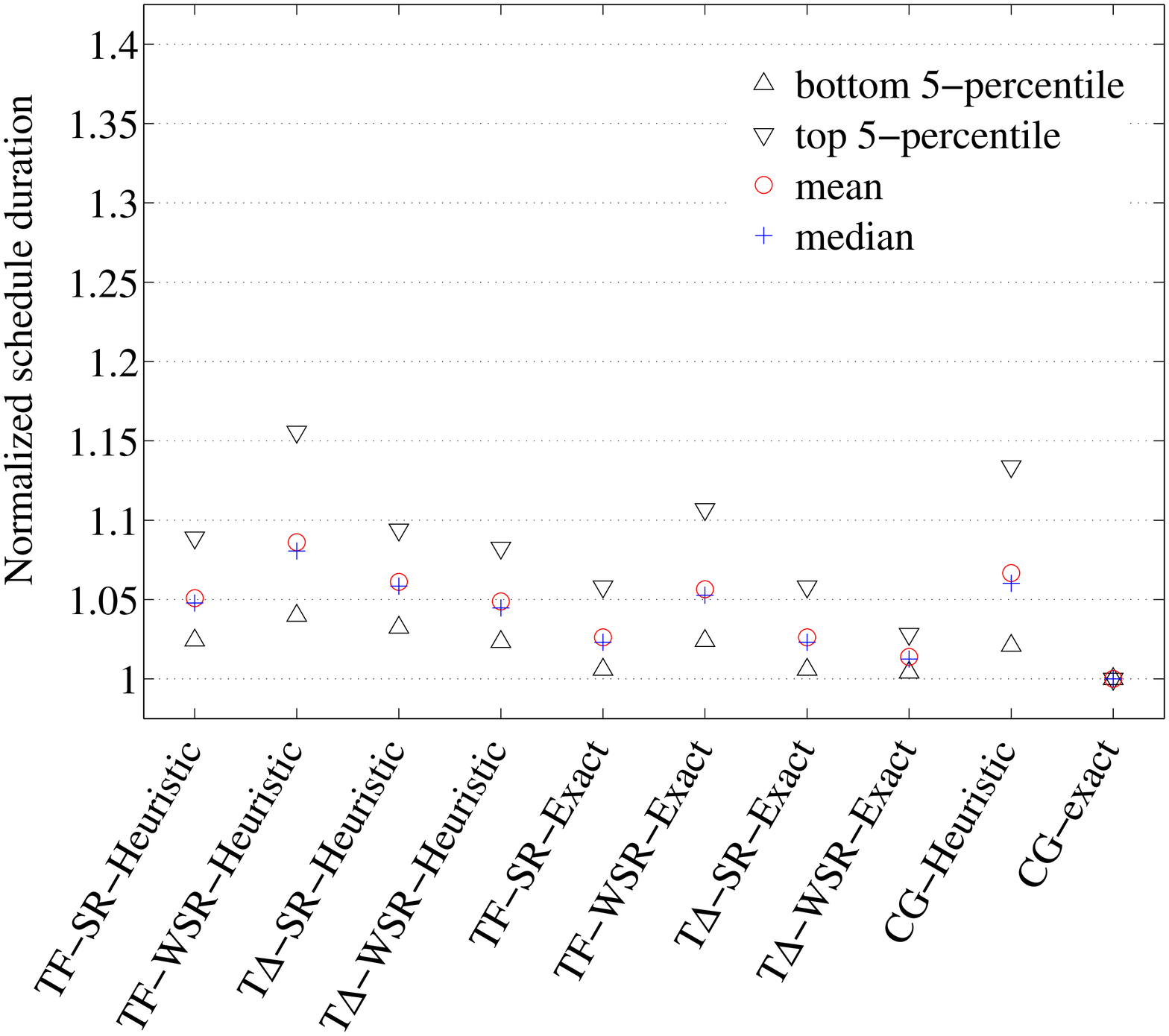}
                \label{fig:Shannon-NU}
                }
    \caption{Normalized schedule durations for all the algorithms under the Shannon-based rate function of Figure \ref{fig:Shannon-BSPK}.}
    \label{fig:Shannon}
\end{figure}

\begin{figure}[tbp]
\vspace{-5mm}
    \centering
    \subfigure[Equal demands]{
                \includegraphics[width=0.48\textwidth]{./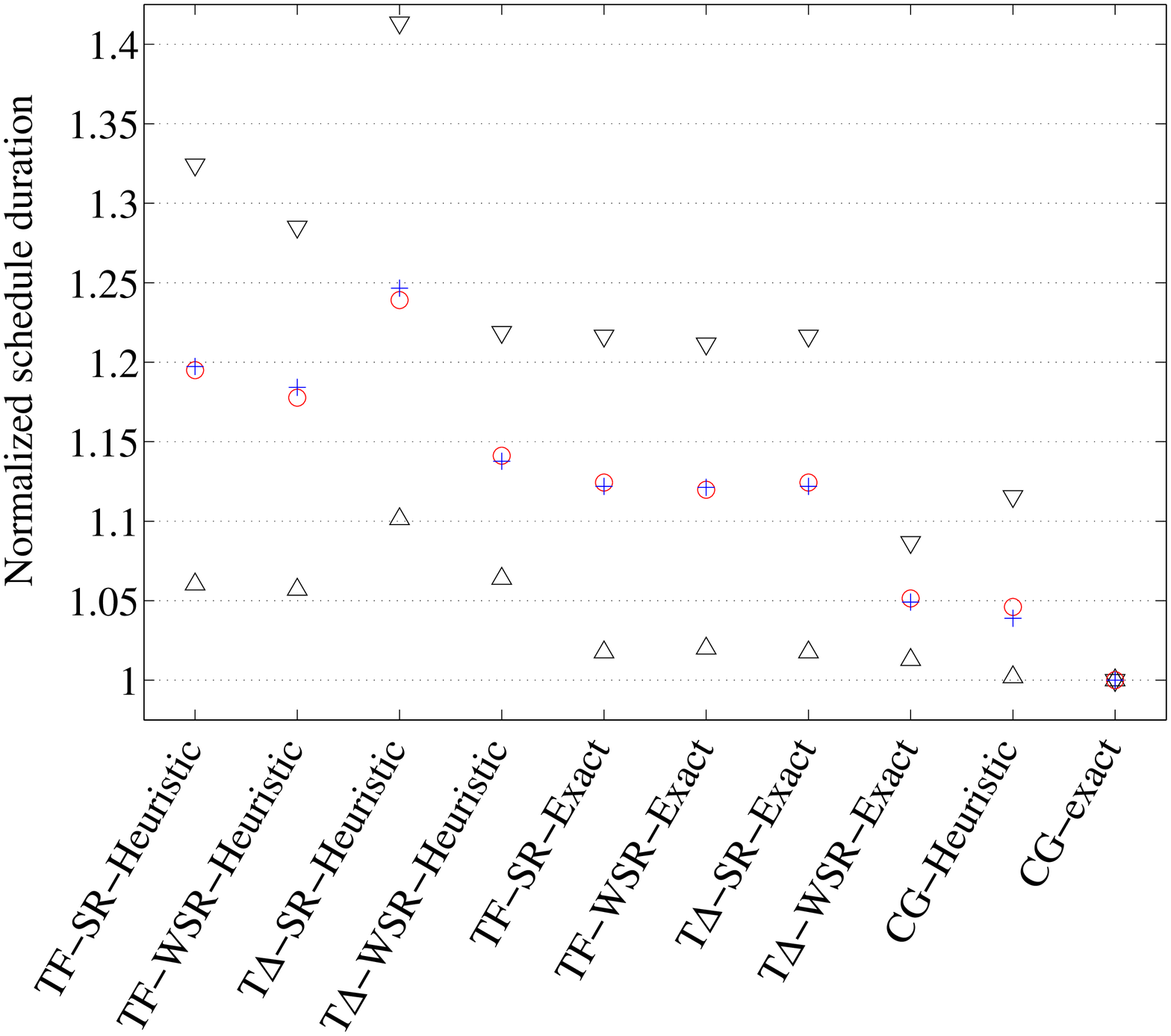}
                \label{fig:BPSK-U}
                }
    \subfigure[Random demands]{
                \includegraphics[width=0.48\textwidth]{./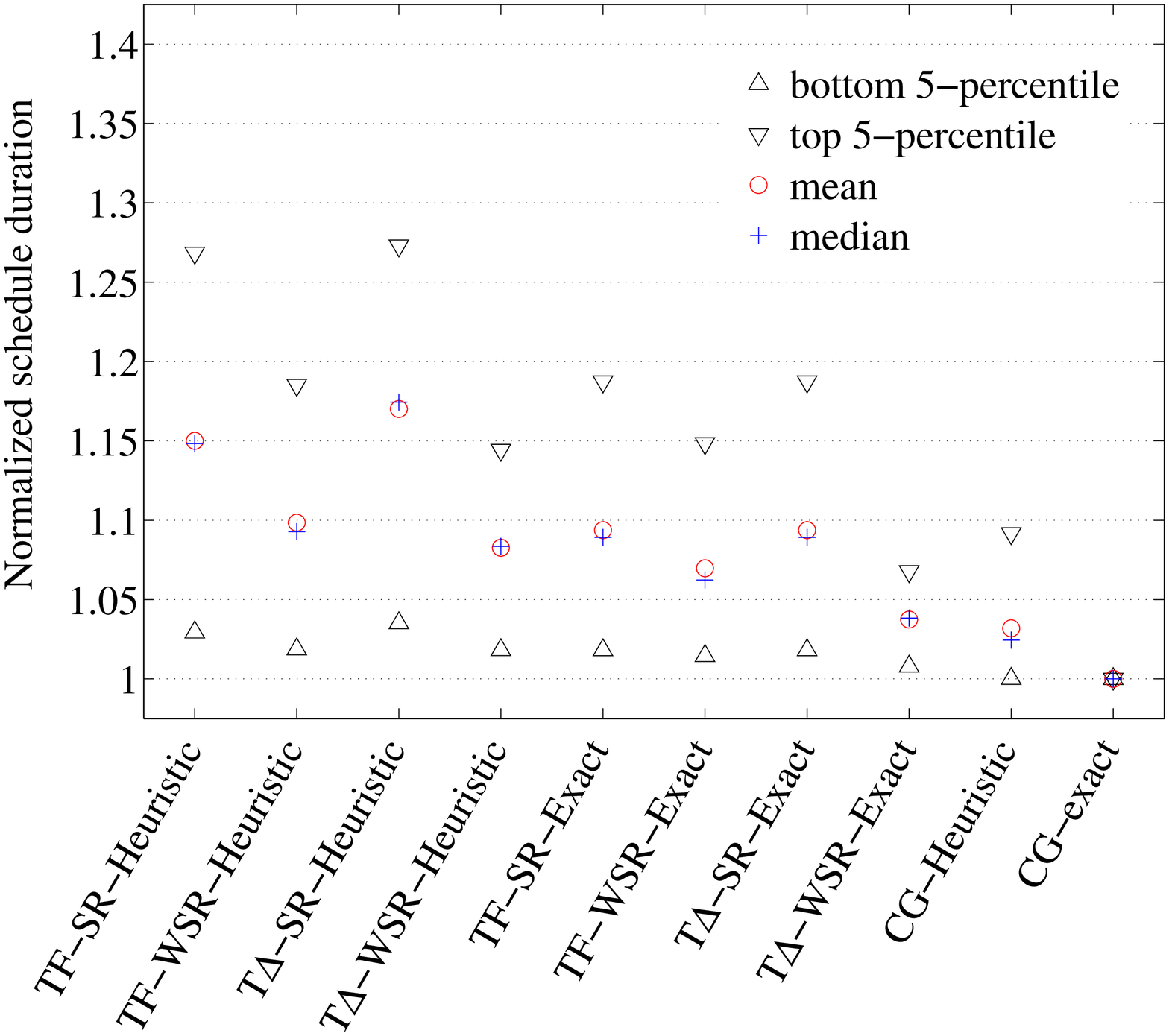}
                \label{fig:BPSK-NU}
                }
    \caption{Normalized schedule durations for all the algorithms under the BPSK-based rate function of Figure \ref{fig:Shannon-BSPK}.}
    \label{fig:BPSK}
\end{figure}

Figures \ref{fig:Shannon} and \ref{fig:BPSK} provide performance
comparison for all the algorithms.  For {\em T$\Delta$}-based
algorithms, based on our preceding discussion, a small value (0.5s) is
chosen for the activation duration parameter $\Delta$. The last
algorithm, {\em CG-exact}, always gives the global optimum,
normalized to 1.0 in the figures. Overall, the other nine algorithms
perform reasonable well, with only one giving an average optimality
gap larger than 20\%.

The first eight algorithms all perform better when the rates are
derived from the Shannon formula, in both demand cases. The
explanation lies in the shapes of the two rate functions. The BPSK
rate is must more robust to interference from concurrent transmissions
in comparison to the Shannon function. Hence, at optimum for the BPSK
rate function, large-cardinality groups having similar rates on the
links are very likely to be used. With the Shannon formula, smaller
groups of higher link rates are more preferred in the optimal
schedule. Consequently, scheduling with BPSK rate is much more prone to
sub-optimality in group selection (as many groups perform similarly
for both {\em SR} and {\em WSR}), and, more importantly, to
sub-optimality in group activation (cf. Example 2 in Section
\ref{sec:algoritm-opt}). Thus {\em TF} and {\em T$\Delta$} become more sub-optimal
for BPSK-derived rates than those derived by the Shannon formula.
This conclusion is further supported by the performance of {\em
CG--Heuristic}. In this case, groups activation is carries out group
activation with LP, which is the best possible solution of determining
the time share among groups, justifying the better performance of
this algorithm for the BPSK case, as opposite to the two heuristic
activation strategies.

For both {\em TF} and {\em T$\Delta$}, heuristic group selection is
always outperformed by the exact one. Hence enhancing the group
selection module alone gives noticeable contribution to the overall
performance, regardless of the activation strategy. As expected, the
impact of sub-optimality in group selection is more striking in BPSK.
As was mentioned above, larger groups are expected at optimum for
BPSK-derived rates.  When the exact solution of group selection
contains many elements, it is less probable that greedy selection, as
used in our heuristic, is able to approach optimality.

Comparing the two metrics {\em SR} and {\em WSR}, the latter always
yields better results for the {\em T$\Delta$} activation strategy, that
is, the notion of remaining demand interacts better with emptying
queues progressively. This was also observed in the discussion of
Figure \ref{fig:TS-analysis}. For BPSK, {\em WSR} also outperforms
{\em SR} for {\em TF} activation.  The reason is that the
optimal schedule with BPSK tends to use groups of similar sizes. With
{\em SR}, there is a higher risk that a small-cardinality group with
low sum-rate and high remaining demand will have to be deployed by the
end, making the overall schedule inferior in comparison to balancing the remaining queues in group
selection.  For Shannon-based rates, the structure at optimum is quite
the opposite, and {\em SR} behaves better than {\em WSR} in {\em TF}.

From the above two Figures, Examples 1 and 2, and the results in
Figure 1, it is inconclusive whether the {\em TF}, or {\em T$\Delta$}
activation should be in general preferred. However, if the {\em WSR}
metric is employed in group selection, {\em T$\Delta$} is clearly
superior, as discussed above.  In addition, the two group activation
strategies coincide in {\em TF--SR--Exact} and {\em
T$\Delta$--SR--Exact}, as justified by our discussion in Section
\ref{sec:algoritm-opt} and the results in Figure 1.

Finally, the demand structure (uniform versus non-uniform) has a
noticeable impact on performance, when heuristic group selection is
used. In general, the results show improved performance when the
initial demand is non-uniform. This can be attributed to the demand
ordering in group construction of the heuristic that we used.
Non-uniform demands aid the heuristic to better differentiate among
the links, especially in the BPSK case where the effect of non-uniform
demand is indeed more prominent, as the links in a group tend to have
more similar rates. With exact group selection coupled with {\em TF}
and {\em T$\Delta$}, the demand structure has virtually no effect on
performance for the Shannon-based rate function.  For BPSK-derived
rates, on the other hand, non-uniform demand leads to smaller
optimality gap. This is because the sub-optimality of {\em TF} and
{\em T$\Delta$} in group activation is more crucial for uniform demand
that assembles the structure of Example 2.

\section{Conclusion}
\label{sec:conclusion}

We have considered the minimum-length scheduling problem for the case of emptying $N$ queues over a shared channel.
The generic consideration of rates, which may be produced by some underlying function, unifies the previously considered formulations.
Several fundamental results of solution characterization have been gained.
First, we have proven the hardness of the problem for all continuous and monotonically increasing function in SINR.
Second, optimality conditions of two base scheduling strategies are developed and formalized. Third, we have demonstrated how the problem class with cardinality-based rates can be solved effectively. On the algorithmic side, we have presented a framework that accommodates both exact and sub-optimal scheduling solutions. Extensive simulation results have been provided and assessed to quantify the performance of some specific algorithm designs.

The research line of the current paper is subject to several
extensions.
For example, we may consider cooperative methods among the links, including relaying each other's messages. In that case we could have multiple transmitters that transmit to the same receiver draining the same queue simultaneously.
Another extension is the fundamental solution characterization under a multi-objective setting that incorporates both efficiency (i.e., schedule length) and energy expenditure, or that includes the aspect of fairness among the links.
Multi-hop or multicasting applications are also of interest.
Last but not least, it is important to use the insights from this work in the problem of scheduling with continuous arrivals (rather than the queue draining problem).

%\section*{Acknowledgments}
%
%The work has been supported by the Swedish Research Council, the
%ELLIIT research excellence center of Sweden, and CENIIT, Link{\"o}ping
%University, Sweden.

\appendix[The BPSK rate function for a multi-user environment]
\label{sec:bpsk}

Assuming an uncoded BPSK modulation scheme, in an interference-free environment the bit error probability $z$ is given by
$$
z = Q\Bigg(\sqrt{\frac{2E_b}{N_0}} \Bigg),
$$
where the function $Q(x)$ is the probability that a Gaussian random variable with zero mean and unit variance exceeds $x$. The $\frac{E_b}{N_0}$  fraction is the system SNR, where the numerator is the bit energy and the denominator is the power spectral density (psd) of the noise, both in Joules \cite{Andrea05}.

In the presence of interference by concurrent transmissions, denoting by $T_0$ the duration of one BPSK symbol the bit rate will be $r_b = \frac{1}{T_0}$. The received bit power is also $P_b = \frac{E_b}{T_0}$ in Watts. Then the error rate can be calculated as
$$
z = Q\Bigg(\sqrt{\frac{2E_b}{I_N}}\Bigg)=Q\Bigg(\sqrt{\frac{2P_bT_0}{I_N}}\Bigg)=Q\Bigg(\sqrt{\frac{2P_b}{I_Nr_b}}\Bigg),
$$
where $I_N$ denotes the sum of all interference energy in Joule (which we consider it to be an AWGN signal) plus the noise psd. Notice that $I_N$ can be approximated by scaling the sum of the interference powers received by a time factor, in order to obtain an appropriate quantity in Joules.

Observe now that any change in the $I_N$ value, under a fixed $z$ and fixed $P_b$, leaves us with the symbol duration $T_0$, i.e. the bit rate, as the only control we have to keep the equation above satisfied.
Hence, solving the error rate equation above yields our approximation to the BSPK bit rate:
$$
r_b = {\frac{2}{Q^{-1}(z)^{2}}}{\frac{P_b}{I_N}},
$$
where $Q^{-1}$ is the inverse $Q$-function.

Note that for general pulses the BSPK symbol rate must satisfy $r_s = B/k$, where $k$ is the spectral efficiency (which for ideal pulses we can assume to be equal to one) and $B$ is the channel bandwidth \cite{Andrea05}. Hence, we limit our BPSK rate function results to a maximum rate value $r_b^{max} = r_s = B$, i.e.
$$
r_b = min\Bigg\{{\frac{2}{Q^{-1}(z)^{2}}}{\frac{P_b}{I_N}}, B\Bigg\}.
$$

%Still, here we can consider that the initial $T_0$ is large enough to accommodate the required bandwidth, maintaining the system always in the first argument of the minimum above.

\begin{figure}[htbp]
\centering
\includegraphics[scale=0.56]{./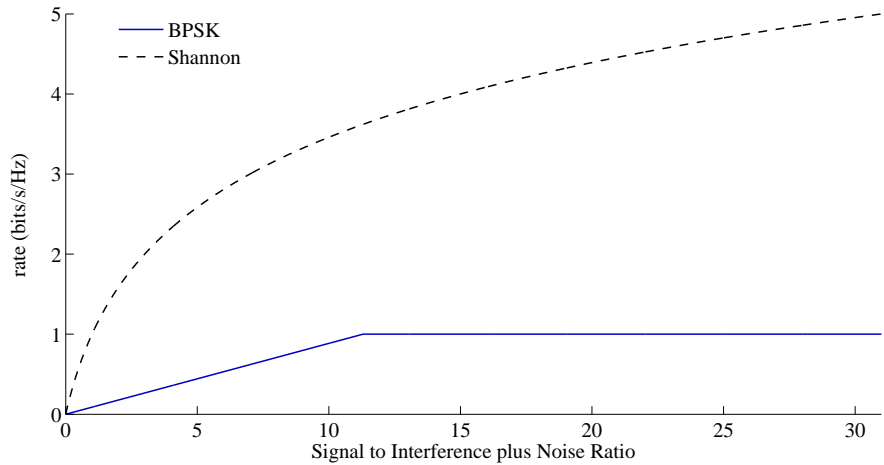}
\caption{The Shannon-based and BPSK-based rates, per bandwidth unit, versus the SINR.}
\label{fig:Shannon-BSPK}

\end{figure}


\begin{thebibliography}{1}

\bibitem{AnDi09}
M.~Andrews and M.~Dinitz.
\newblock Maximizing capacity in arbitrary wireless networks in the SINR model: complexity and game
theory.
\newblock In {\em Proc.\ of IEEE INFOCOM '09}, 2009.

\bibitem{Bertsimas97}
D.~Bertsimas, J.~N.~Tsitsiklis.
\newblock {\em Introduction to Linear Optimization}. Athena Scientific, 1997.

\bibitem{Bertsekas99}
D.~P.~Bertsekas.
\newblock {\em Nonlinear Programming: 2nd Edition}. Athena Scientific, 1999.

\bibitem{BjVaYu03}
P.~Bj{\"o}rklund, P. V{\"a}rbrand, and D. Yuan.
\newblock Resource optimization of spatial TDMA in ad hoc radio networks: a column generation approach.
\newblock In {\em Proc.\ of IEEE INFOCOM '03}, 2003.

\bibitem{BjVaYu04}
P.~Bj{\"o}rklund, P. V{\"a}rbrand, and D. Yuan.
\newblock A column generation method for spatial TDMA scheduling in ad hoc networks.
\newblock {\em Ad Hoc Networks}, 2:405-418, 2004.

\bibitem{BoEp06}
S.~A.~Borbash and A.~Ephremides.
\newblock Wireless link scheduling with power control and SINR constraints.
\newblock {\em IEEE Transactions on Information Theory}, 52:5106--5111, 2006.

\bibitem{BoLiXi10}
C~Boyac{\'y}, B.~Li, and Y.~Xia.
\newblock An investigation on the nature of wireless scheduling.
\newblock In {\em Proc.\ of IEEE INFOCOM '10}, 2010.

%\bibitem{CaFiMa08}
%A.~Capone, I.~Filippini, and F.~Martignon.
%\newblock Joint routing and scheduling optimization in wireless mesh networks with directional antennas.
%\newblock In {\em Proc.\ of IEEE ICC '08}, 2008.

\bibitem{CaGuChYu11}
A.~Capone, S.~Gualandi, L.~Chen, and D.~Yuan.
\newblock A new computational approach for maximum link activation in wireless networks
under the SINR model.
\newblock {\em IEEE Transactions on Wireless Communications}, 10:1368--1372, 2011.

\bibitem{Chen10}
D.-S.~Chen, R.~G.~Batson, Y.~Dang
\newblock  {\em Applied Integer Programming: Modeling and Solution}. Wiley, 2010.

\bibitem{AMPL02}
R.~Fourer, D.~M.~Gay, and B.~W.~Kernighan.
\newblock  {\em AMPL: A Modeling Language for Mathematical Programming}. Duxbury Press, 2002.

\bibitem{GeNeTa06}
L.~Georgiadis, M.~J.~Neely, and L.~Tassiulas.
\newblock Resource allocation and cross-layer control in wireless networks.
\newblock {\em Foundations and Trends in Networking}, 1, 2006.

\bibitem{Andrea05}
A.~Goldsmith.
\newblock {\em Wireless Communications}. Cambridge University Press, 2005.

\bibitem{GoHaWaWe09}
O.~Goussevskaia, M.~M.~Halld{\' o}rsson, R.~Wattenhofer, and E.~Welzl.
\newblock Capacity of arbitrary wireless networks.
\newblock In {\em Proc.\ of IEEE INFOCOM '09}, 2009.

\bibitem{GoPsWa07}
O.~Goussevskaia, Y.~A.~Pswald, and R.~Wattenhofer.
\newblock Complexity in geometric SINR.
\newblock In {\em n Proc.\ of ACM MobiHoc '07}, 2007.

\bibitem{HaSa88}
B.~Hajek and G.~Sasaki,
\newblock Link scheduling in polynomial time,
\newblock {\em IEEE Trans. Inf. Theory}, 34:910--917, 1988.

\bibitem{Ke11}
T.~Kesselheim.
\newblock A constant-factor approximation for wireless capacity
maximization with power control in the SINR model.
\newblock In {\em Proc.\ of ACM-SIAM SODA '11}, 2011.

\bibitem{KoWi10}
S.~Kompella, J.~E.~Wieselthier, A.~ Ephremides, H.~D.~Sherali, and G.~D.~Nguyen.
\newblock  On Optimal SINR-Based Scheduling in Multihop Wireless Networks.
\newblock {\em IEEE Trans. Inf. Theory}, 18:1713--1724, 2010.

\bibitem{LuYa94}
C.~Lund and M.~Yannakakis.
\newblock On the hardness of approximating minimization problems.
\newblock {\em Journal of the ACM}, 41:960--981, 1994.

\bibitem{Mu83}
K.~G.~Murty.
\newblock {\em Linear Programming}, Wiley, 1983.

\bibitem{NeMoRo05}
M.~J.~Neely, E.~Modiano, and C.~E.~Rohrs.
\newblock Dynamic power allocation and routing for time-varying wireless networks
\newblock {\em IEEE Journal on Selected Areas in Communications}, 23:89--103, 2005.

\bibitem{PaEp08}
A.~Pantelidou and A.~Ephremides.
\newblock Minimum schedule lengths with rate control in wireless networks.
\newblock In {\em Proc.\ of IEEE MILCOM '08}, 2008.

\bibitem{PaEp08b}
A.~Pantelidou and A.~Ephremides.
\newblock  Optimal rate control policies for proportional fairness in wireless networks.
\newblock In {\em Proc.\ of Conference on Information Sciences and Systems (CISS)}, 2008.

\bibitem{PaEp09}
A.~Pantelidou and A.~Ephremides.
\newblock Scheduling in wireless networks.
\newblock {\em Foundations and Trends in Networking}, 4(4), 2009.

\bibitem{Sc86}
A.~Schrijver.
\newblock {\em Theory of linear and integer programming.}
\newblock John Wiley \& Sons, 1998.

\bibitem{TaEp92}
L.~Tassiulas and A.~Ephremides.
\newblock Stability properties of constrained queueing systems and scheduling policies for maximum throughput in multihop radio networks.
\newblock  {\em IEEE Transactions on Automatic Control}, 37:1936--1948, 1992.

\bibitem{Proakis00}
J.~Proakis.
\newblock {\em Digital Communications}. McGraw-Hill Science/Engineering/Math, 2000.

\bibitem{Wan11}
Peng-Jun~Wan, et~al..
\newblock Wireless Link Scheduling under Physical Interference Model.
\newblock In {\em Proc.\ of IEEE INFOCOM 2011}, 2011.

\bibitem{XuTa09}
X.~Xu and S.~Tang.
\newblock A constant approximation algorithm for link scheduling in
arbitrary networks under physical interference model.
\newblock In {\em Proc.\ of ACM FOWANC '09}, 2009.


\end{thebibliography}
\end{document}